\newtheoremstyle{thm}
{15pt}
{5pt}
{\itshape}
{}
{\bf}
{}
{ }
{}
\theoremstyle{thm} 
\newtheorem{Alg}{Algorithm}
\newtheorem{lemma}{Lemma}
\newtheorem{theorem}{Theorem}
\newtheorem{corollary}{Corollary}
\newtheorem{claim}{Claim}
\newtheorem{Ass}{Assumption}
\begin{document}
\bibliographystyle{apalike} 

\title{Estimation for High-Dimensional Linear Mixed-Effects Models Using
  $\ell_1$-Penalization}
\author{J\"urg Schelldorfer, Peter B\"uhlmann and Sara van de Geer \footnote{The research is
supported by the Swiss National Science Foundation (grant
no. 20PA21-120043/1). The authors thank the members of the DFG-SNF Forschergruppe 916 for many stimulating discussions.}\\Seminar f\"ur Statistik
  \\ETH Z\"urich\\8092 Z\"urich
}



\maketitle

\begin{abstract}
We propose an $\ell_1$-penalized estimation procedure
for high-dimensional linear mixed-effects models. The models are useful
whenever there is a grouping structure among high-dimensional observations,
i.e. for clustered data. We prove a consistency and an oracle optimality
result and we develop an algorithm with provable numerical
convergence. Furthermore, we demonstrate the performance of the method on
simulated and a real high-dimensional data set.
\end{abstract}

\vspace*{.3in}

\noindent\textit{Key words}: {adaptive lasso, coordinate gradient descent,
  coordinatewise optimization, lasso, random-effects model, variable selection, variance components}\\

\section{Introduction} \label{Sec1}

\subsection{High-dimensional statistical inference: some known results for convex loss functions} 

Substantial progress has been achieved over the last decade in
high-dimensional statistical inference where the number of parameters $p$ is
allowed to be of much larger order than sample size $n$. To fix ideas,
suppose we focus on 
estimation of a $p$-dimensional parameter $\bm\beta_0$ based on $n$ noisy
observations where $p \gg n$. Although
such a problem is ill-posed in general, it can be accurately
solved if the underlying true structure of $\bm\beta_0$ is sparse. Here, sparsity
may be measured in terms of the $\ell_r$-norm $\|\bm\beta\|_r = (\sum_{j=1}^p
|\beta_j|^r)^{1/r}\ (0 \le r < \infty)$. Very roughly speaking,
high-dimensional statistical inference is possible, in 
the sense of leading to reasonable accuracy or asymptotic consistency, if 
\begin{eqnarray*}
\log(p) \cdot \mbox{sparsity}(\bm\beta_0)^{\alpha} \ll n,
\end{eqnarray*}
where typically $\alpha =2$ (cf. formula (\ref{f1})) or $\alpha = 1$ (cf. formula
(\ref{f2})), and assuming that the underlying (e.g. regression) design
behaves reasonably.   

A lot of attention has been devoted to high-dimensional linear models
\begin{equation*}
\bm{y} = \bm{X} \bm\beta_0 + \bm\varepsilon ,
\end{equation*}
with $n \times p$ design matrix $\bm{X}$ and $p \gg n$. A very popular and
powerful estimation method is the Lasso, proposed by
\cite{Tibs96}. It is an acronym for
  Least Absolute Shrinkage and Selection Operator and the name is
  indicating that the method is
  doing some variable selection in the sense that some of the regression
  coefficient estimates are exactly zero. Among the main reasons
why it has become very popular for high-dimensional estimation problems are 
its statistical accuracy for prediction and variable selection coupled with
its computational feasibility which involves convex optimization only. The
latter is in sharp contrast to exhaustive variable selection based on least
squares estimation whose computational complexity is in general exponential
in $p$. The statistical properties of the Lasso in high-dimensional
settings have been worked out in numerous articles. Without (essentially) a 
condition on the design $\bm{X}$, the Lasso satisfies:
\begin{equation}\label{f1}
\|\bm{X}(\hat{\bm\beta} - \bm\beta_0)\|_2^2/n = O_P(\|\bm\beta_0\|_1
\sqrt{\log(p)/n})
\end{equation}
where $O_P(\cdot)$ is with respect to $p \ge n \to \infty$
\citep{BuhlGeer10}. That is, if the
model is sparse with $\|\bm\beta_0\|_1 \ll \sqrt{\log(p)/n}$, we obtain
consistency. Such kind of a result has been proved by
\cite{Greenshtein04}. Later, optimality has been established where
(\ref{f1}) is improved to 
\begin{eqnarray}\label{f2}
&\|\bm{X}(\hat{\bm\beta} - \bm\beta_0)\|_2^2/n = O_P(s_0
\xi^{-2} \log(p)/n),\nonumber\\ 
\mbox{and furthermore}\nonumber\\ \ &\|\hat{\bm\beta}- \bm\beta_0\|_r  =
O_P(s_0^{1/r} \xi^{-2} \sqrt{\log(p)/n}),\ r \in \{1,2\},
\end{eqnarray}
where $s_0$ equals the number of non-zero coefficients and
$\xi^2$ denotes a restricted eigenvalue of the design matrix $\bm{X}$
\citep{BuhlGeer10}. The rate in (\ref{f2}) is optimal up to the $\log(p)$
factor and the restricted 
eigenvalue $\xi^2$: oracle least squares estimation where the relevant
variables would be known would have rate $O_P(s_0/n)$. 
We emphasize that for obtaining optimal convergence rates as in (\ref{f2}), we
need to make some assumptions on the design that $\xi^2$ is not getting too
small as $p \ge n \to \infty$, something we do not require in
(\ref{f1}). Works dealing with various aspects around (\ref{f2}) include
\cite{Bunea07}, \cite{Geer08}, \cite{Zhang08}, \cite{Meins09} and
\cite{Bickel09}.\\
A quite different problem is variable selection for inferring the true
underlying active set $S_0 = \{\ 1 \le k \le p :\ \beta_{0,k} \neq
0\}$. A simple estimator is $\hat{S} = \{\ 1 \le k \le p:\ \hat{\beta}_k \neq 0\}$ where no
significance testing is involved. \cite{Mein06} show for the Lasso that
under the so-called 
neighborhood stability condition for the design, the Lasso does consistent
variable selection in the sense that 
\begin{equation}\label{f3}
\mathbb{P}[\hat{S} = S_0] \to 1\ (p \ge n \to \infty),
\end{equation} 
assuming that the non-zero coefficients in $S_0$ are sufficiently large in
absolute value, e.g. $\min_{k \in S_0} |\beta_{0,k}| \gg s_0
\xi^{-2} \sqrt{\log(p)/n}$ which is the rate in (\ref{f2}) for $r=1$.  
The neighborhood stability condition is equivalent to the irrepresentable
condition used in \cite{Zhao06}, and they are both sufficient and (essentially)
necessary for consistent model selection as in (\ref{f3}). Unfortunately,
the neighborhood stability and the irrepresentable condition are rather
restrictive and many designs $\bm{X}$ would violate them. In case of
(weaker) restrictive eigenvalue conditions, 
one still has the variable screening property for the Lasso 
\begin{equation}\label{f4}
\mathbb{P}[\hat{S} \supseteq S_0] \to 1\ (p \ge n \to \infty),
\end{equation} 
again assuming that the non-zero coefficients in $S_0$ are sufficiently
large in absolute value. Formula (\ref{f4}) says that the Lasso does
not miss a relevant variable from $S_0$; in addition, for the Lasso, the
cardinality $|\hat{S}| \le \min(n,p)$ and hence, for $p \gg n$, we achieve
a huge dimensionality reduction in (\ref{f4}). The adaptive Lasso, proposed
by \cite{Zou06} is a two-stage method which achieves (\ref{f3}) under
weaker restrictive eigenvalue assumption than the irrepresentable condition
\citep{HuangMaZhang08,GeerBuhlZhou10}. We summarize the basic facts in
Table \ref{table1}.
\begin{table}[!h]
\footnotesize
\begin{center}
\caption{\textit{Properties of the Lasso and required conditions to
  achieve them}} \label{table1}
\vspace{0.2cm}
\begin{tabular}{l|l|l}
\hline
property & design condition & size of non-zero coeff.
\\
\hline \hline
consistency as in (\ref{f1}) & no requirement & no requirement 
\\ 

\hline
fast convergence rate as in (\ref{f2}) & restricted eigenvalue & no
requirement \\

\hline
variable selection as in (\ref{f3}) & neighborhood stability & sufficiently
large\\ 
 & $\Leftrightarrow$ irrepresentable cond. & \\

\hline
variable screening as in (\ref{f4}) & restricted eigenvalue & sufficiently
large \\ \hline
\end{tabular}
\end{center}
Restricted eigenvalue assumption is weaker than the neighborhood
stability or irrepresentable condition \citep{GeerBuhl09}. For the
adaptive Lasso: variable selection as in (\ref{f3}) can be achieved under
restricted eigenvalue conditions.
\end{table}
\normalsize
Moreover, everything essentially holds in an analogous
way when using the Lasso in generalized linear models, i.e. $\ell_1$-norm
penalization of the 
negative log-likelihood \citep{Geer08}. Finally, we note that
\cite{Bickel09} prove equivalent theoretical behaviour of the Lasso and
the Dantzig selector \citep{Candes07} in terms of (\ref{f2}), exemplifying
that properties like (\ref{f2}) hold for other estimators 
than the Lasso as well. \\

Having some variable screening property as in (\ref{f4}), we can reduce the
false positive selections by various methods, besides the adaptive Lasso
mentioned above, including also stability selection \citep{MeinBuhl10} based
on subsampling or via
assigning p-values \citep{WassRoed09,MeinMeiBuhl09} based on sample
splitting. 

Regarding computation, the Lasso involves convex optimization. Popular
algorithms are based on the homotopy method
\citep{OsbPresTur00} such as LARS \citep{Efro03}. More
recently, it has been argued that the coordinate gradient descent approach
is typically more efficient \citep{Meie08,Wu08,Fried08}. 

\subsection{High-dimensional linear mixed-effects models with 
non-convex loss function} 

The underlying assumption that all observations are independent is not
always appropriate. We consider here linear mixed-effects models
\citep{Lair82,PinJB2000,VerbMole00,Demi04} where
high-dimensional data incorporates a grouping structure with
independent observations  
between and dependence within groups. Mixed-effects models, including
random besides fixed effects, are a popular
extension of linear models in that direction. For 
example, many applications concern 
longitudinal data where the random effects vary between groups and thereby
induce a dependence structure within groups. It is a crucial and
important question how to cope with high-dimensional linear mixed-effects
models. Surprisingly, for this problem, there is no established procedure
which is well understood in terms of statistical properties. 

The main difficulty arises from non-convexity of the negative
log-likelihood function which makes computation and theory very
challenging. We are presenting some methodology, computation and theory for
$\ell_1$-norm penalized maximum likelihood estimation in linear mixed-effects models where the number of fixed effects may be much larger than
the overall sample size but the number of covariance parameters of the
random effects part being small. 
Based on a framework for $\ell_1$-penalization of smooth but non-convex
negative log-likelihood functions \citep{Stad09}, we develop in Section
\ref{Sec3} analogues of (\ref{f1}), (\ref{f2}) and (\ref{f4}), see also Table
\ref{table1}, and some properties of an adaptively $\ell_1$-penalized
estimator. In our view, these are the key properties in high-dimensional
statistical inference in any kind of model. For example, with (\ref{f4}) at
hand, p-values for single fixed-effects coefficients could be constructed
along the lines of \cite{MeinMeiBuhl09}, controlling the familywise error
or false discovery rate (but we do not apply such a method in this
paper). Furthermore, we 
design in Section \ref{Sec5} an efficient coordinate gradient descent 
algorithm for linear mixed-effects models which is proved to converge
numerically to a stationary point of the corresponding non-convex
optimization problem. 

We remark that we focus here on the case where it is pre-specified which
covariates are modelled with a random effect and which are not. In some
situations, this is fairly realistic: e.g., a random intercept model is
quite popular and often leads to a reasonable model fit. Without
pre-specification of the covariates 
having a random effect, one could do variable selection based on penalized
likelihood approaches on the level of
random effects: this has been developed from a methodological and
computational perspective by \cite{Bondell10} and \cite{Ibrahim10} for
low-dimensional settings. Addressing 
such problems in the truly high-dimensional scenario is beyond the
scope of this paper. However, we present in Section \ref{Sec7} a real
high-dimensional data problem where some exploratory analysis is used for
deciding which covariates are to be modelled with a random effect. This
example also illustrates empirically that there is a striking improvement
if we incorporate random effects into the model, in
comparison to a high-dimensional linear model fit.\\

The rest of this paper is organised as follows. In Section
\ref{Sec2} we define the $\ell_1$-penalized linear mixed-effects estimator. In
Section \ref{Sec3}, we present the
theoretical results for this estimator before describing
the details of a computational algorithm in Section \ref{Sec5}. After some simulations in
Section \ref{Sec6} we apply the procedure to a real data set. The technical proofs
are deferred to an Appendix in the Supporting Information.

\section{Linear mixed-effects models and $\ell_1$-penalized estimation} \label{Sec2}

\subsection{High-dimensional model set-up}
We assume that the observations are inhomogeneous in the sense that they
are not independent, but grouped. Let $i=1,\ldots ,N$ be the grouping index and
$j=1,\ldots,n_i$ the observation index within a group. Denote by
$N_T=\sum_{i=1}^N n_i$ the total number of observations.
For each group, we observe a $n_i \times 1$ vector of responses
$\bm{y}_i$, and let $\bm{X}_i$ be a $n_i \times p$ fixed-effects design matrix, $\bm{\beta}$ a $p
\times 1$ vector of fixed regression coefficients, $\bm{Z}_i$ a $n_i \times
q$ random-effects design matrix and $\bm{b}_i$ a group-specific vector
of random regression coefficients. \\
Using the notation from \cite{PinJB2000}, the model can be written as
\begin{equation}\label{model2}
\bm{y}_i = \bm{X}_i \bm{\beta} + \bm{Z}_i \bm{b}_i + \bm{\varepsilon}_i \hspace{1cm} i=1,\ldots,N,
\end{equation}
assuming that
\begin{enumerate}
\item [$i)$]  $\bm{\varepsilon}_i \sim
\mathcal{N}_{n_i}(\bm{0}, \sigma^2\bm{I}_{n_i})$ and uncorrelated for
  $i=1,\ldots,N$, 
\item [$ii)$] $\bm{b}_i \sim \mathcal{N}_q(\bm{0},\bm\Psi)$ and uncorrelated for
  $i=1,\ldots,N$, 
\item [$iii)$]
  $\bm{\varepsilon}_1,\ldots,\bm{\varepsilon}_N,\bm{b}_1,\ldots,\bm{b}_N$
  are independent.
\end{enumerate}
Here, $\bm\Psi=\bm\Psi_{\bm\theta}$ is a general covariance matrix where
$\bm\theta$ is an unconstrained set of parameters (with dimension $q^*$)
such that $\bm\Psi_{\bm\theta}$ is positive definite (i.e. by using the Cholesky decomposition). Possible structures for $\bm\Psi$ may be a multiple of
the identity, a diagonal or a general positive definite matrix. We would
like to remark that assumption $i)$ can be generalized to $ i')$ $\bm{\varepsilon}_i \sim
\mathcal{N}_{n_i}(\bm{0}, \sigma^2\bm\Lambda_i)$ with
$\bm\Lambda_i=\bm\Lambda_i(\bm\lambda)$ for a parameter vector
$\bm\lambda$. This generalization still fits into the
theoretical framework presented in Section \ref{Sec3}. Nonetheless, for the
sake of notational simplicity, we restrict ourselves to assumption $i)$.\\ 
As indicated by the index $i$, the $\bm{b}_i$ are
different among the groups. All observations have the coefficient $\bm{\beta}$
in common whereas the value of $\bm{b}_i$ depends on the group that the observation
belongs to. In other words, for each group there are group-specific
deviations $\bm{b}_i$ from the overall effects $\bm{\beta}$. We assume
throughout the paper that the design matrices $\bm{X}_i$ and $\bm{Z}_i$ are deterministic, i.e. fixed design.\\

We allow that the number $p$ of fixed-effects regression
coefficients may be much larger than the total number of observations,
i.e. $N_T \ll p$. Furthermore, the number $q$ of random-effects
variables might be as large as $q \le p$, but the dimension $q^*$ of the
variance-covariance parameters is assumed to be small ($q^* \ll N_T$). We aim at estimating the fixed regression parameter vector $\bm{\beta}$,
the random effects $\bm{b}_i$ and the variance-covariance parameters
$\bm\theta$ and $\sigma^2$. Therefore, $\tilde{\bm\phi}:=(\bm\beta^
T,\bm\theta^T,\sigma^2)^T$ defines the complete parameter vector with at most length $p
+ \frac{q(q+1)}{2} + 1$. From model (\ref{model2}) we deduce that $\bm{y}_1,\ldots,\bm{y}_N$ are
independent and $\bm{y}_i \sim \mathcal{N}_{n_i}({\bm{X}_i
  \bm{\beta},\bm{V}_i(\bm\theta,\sigma^2)})$ with $\bm{V}_i(\bm\theta,\sigma^2)
= \bm{Z}_i\bm\Psi_{\bm\theta}\bm{Z}_i^T + 
\sigma^2 \bm{I}_{n_i}$. Denote the stacked vectors $\bm{y}=(\bm{y}^T_1,\ldots,\bm{y}^T_N)^T$, 
$\bm{b}=(\bm{b}^T_1,\ldots,\bm{b}^T_N)^T$, $\bm\varepsilon=(\bm\varepsilon_1^T,\ldots,\bm\varepsilon_N^T)^T$ and the stacked
matrices $\bm{X}=(\bm{X}_1^T,\ldots,\bm{X}_N^T)^T$, $\bm{Z}=\diag(\bm{Z}_1,\ldots,\bm{Z}_N)$ and
$\bm{V}=\diag(\bm{V}_1,\ldots,\bm{V}_N)$. Then model (\ref{model2}) can be
written as
\begin{equation}
\bm{y} = \bm{X} \bm{\beta} + \bm{Z} \bm{b} + \bm{\varepsilon}
\end{equation}
and the negative log-likelihood is given by
\small
\begin{equation} \label{seven}
-\ell(\tilde{\bm\phi})=-\ell (\bm\beta,\bm\theta,\sigma^2) =
 \frac{1}{2} \Big\{ N_T \log(2 \pi) +
\log|\bm{V}| +  (\bm{y}-\bm{X}\bm{\beta})^T
\bm{V}^{-1}(\bm{y}-\bm{X}\bm{\beta}) \Big\}, 
\end{equation}
\normalsize
where $|\bm{V}|=\det(\bm{V})$.

\subsection{$\ell_1$-penalized maximum likelihood estimator}
Due to the possibly large number of covariates ($N_T \ll p$ setting), we cannot use the classical
maximum likelihood or restricted maximum likelihood approach. Assume that the fixed regression coefficients are sparse in
the sense that many parameters are zero. We then attenuate these difficulties by adding an
$\ell_1$-penalty on the fixed regression coefficients. By doing so, we achieve a sparse
solution with respect to the fixed effects. This leads us to consider the following objective function:
\begin{equation} \label{five}
Q_{\lambda}(\bm\beta,\bm\theta,\sigma^2) :=  \frac{1}{2} \log|\bm{V}| +  \frac{1}{2}(\bm{y}-\bm{X}\bm{\beta})^T
\bm{V}^{-1}(\bm{y}-\bm{X}\bm{\beta}) + \lambda \sum_{k=1}^p
|\beta_k| ,
\end{equation}
where $\lambda$ is a nonnegative
regularization parameter. Consequently, we estimate the fixed regression
coefficient vector $\bm{\beta}$ and the variance components $\bm\theta$ and
$\sigma^2$ by
\begin{equation} \label{opt2}
\hat{\tilde{\bm\phi}} = (\hat{\bm{\beta}},\hat{\bm\theta},\hat{\sigma}^2)=
\argmin_{\bm\beta, \bm\theta,\sigma^2>0,\bm\Psi>0}
Q_{\lambda}(\bm{\beta},\bm\theta,\sigma^2).
\end{equation}
For fixed variance parameters $\bm\theta$ and $\sigma^2$, the minimization with
respect to $\bm{\beta}$ is a convex optimization problem. Since we want to
make use of this convexity (see Section 4), we do not profile the
likelihood function, as usually done in the mixed-effects
model framework \citep{PinJB2000}. However, with
respect to the full parameter vector $\bm\phi$, we have a non-convex objective
function and hence, we have to deal with a non-convex problem. This requires a more general framework in theory as well as in
computation. In the following sections, we discuss how to address this
issue.\\

\subsection{Prediction of the random-effects coefficients}
We predict the random-effects coefficients $\bm{b}_i$, $i=1,\ldots,N$ by
the maximum a posteriori (MAP) principle. Denoting by $f$ the density of the corresponding Gaussian random
variable, we define
\begin{align*}
\tilde{\bm{b}}_i & = \argmax_{\bm{b}_i} f(\bm{b}_i | \bm{y}_1,\ldots,\bm{y}_N,\bm{\beta},\bm\theta,\sigma^2) = \argmax_{\bm{b}_i} f(\bm{b}_i |\bm{y}_i,\bm{\beta},\bm\theta,\sigma^2) \\
& = \argmax_{\bm{b}_i} {\frac{f(\bm{y}_i|\bm{b}_i,\bm\beta,\sigma^2) \cdot
    f(\bm{b}_i|\bm\theta)}{f(\bm{y}_i|\bm\beta,\bm\theta,\sigma^2)}} \\
& = \argmin_{\bm{b}_i} \Bigg\{\frac{1}{\sigma^2} \|\bm{y}_i-\bm{X}_i\bm{\beta}-\bm{Z}_i
\bm{b}_i \|^2 +  \bm{b}_i^T\bm\Psi_{\bm\theta}^{-1}\bm{b}_i    \Bigg\} .
\end{align*}
From this we get $\tilde{\bm{b}}_i = [\bm{Z}_i^T\bm{Z}_i
+\sigma^2\bm\Psi_{\bm\theta}^{-1}]^{-1} \bm{Z}_i^T\bm{r}_i$  where $\bm{r}_i =
(\bm{y}_i-\bm{X}_i\bm{\beta})$ is the (marginal) residual vector. Since the
true values of $\bm{\beta}$, $\bm\theta$ and $\sigma^2$ are unknown, the $\bm{b}_i$'s are predicted by
$\hat{\bm{b}}_i = [\bm{Z}_i^T\bm{Z}_i +\hat{\sigma}^2\bm\Psi_{\hat{\bm\theta}}^{-1}]^{-1} \bm{Z}_i^T\hat{\bm{r}}_i$ with
$\hat{\bm{r}}_i = (\bm{y}_i-\bm{X}_i\hat{\bm{\beta}})$, using the estimates from (\ref{opt2}).

\subsection{Selection of the regularization parameter}
The estimation requires to choose a regularization parameter $\lambda$. We
propose to use the Bayesian Information Criterion (BIC) defined by
\begin{equation} \label{bic}
BIC_{\lambda}:=-2\ell(\hat{\bm{\beta}},\hat{\bm\theta},\hat{\sigma}^2) + \log N_T \cdot
\hat{df}_{\lambda},
\end{equation}
where $\hat{df}_{\lambda}:=|\{1\le k \le p : \hat{\beta}_k \ne
0\}| + \dim(\bm\theta)$ is the sum of the number of the
non-zero fixed regression coefficients and the number of variance-covariance
parameters. The use  of $|\{1\le k \le p :
\hat{\beta}_k \ne 0\}|$ as a measure of the degrees of freedom is motivated by the work of \cite{Zou07} who show that the
expected number of degrees of freedom for the Lasso in a linear model is
given by the number of non-zero estimated coefficients.\\
Obviously, there are other tuning parameter selection methods, for
example cross-validation and AIC-type criteria, among others. Advocating
the BIC as selection criterion is based on our empirical experience that it
performs best in both simulations and real data examples (see Section \ref{Sec6} and
\ref{Sec7}). 

\subsection{Adaptive $\ell_1$-penalized maximum likelihood
  estimator}
Due to the bias of the Lasso, \cite{Zou06} proposed the adaptive
Lasso. For some given weights $w_1,\ldots,w_p$, the 
adaptive $\ell_1$-penalized maximum likelihood estimator has the
following objective function instead of (\ref{five}):
\begin{equation*}
Q_{\lambda}^{w_1,\ldots,w_p}(\bm{\beta},\bm\theta,\sigma^2) := \frac{1}{2} \log|\bm{V}| + \frac{1}{2}(\bm{y}-\bm{X}\bm{\beta})^T
\bm{V}^{-1}(\bm{y}-\bm{X}\bm{\beta}) +
\lambda \sum_{k=1}^p w_k |\beta_k| ,
\end{equation*}
and hence
\begin{equation} \label{opt3}
\hat{\tilde{\bm\phi}} = (\hat{\bm{\beta}},\hat{\bm\theta},\hat{\sigma}^2)=
\argmin_{\bm{\beta}, \bm\theta, \sigma^2>0,\bm\Psi>0}
Q_{\lambda}^{w_1,\ldots,w_p}(\bm{\beta},\bm\theta,\sigma^2) .
\end{equation}
The weights $w_1,\ldots,w_p$ may be calculated from an initial estimator
$\hat{\bm\beta}_{init}$ in (\ref{opt2}) with $w_k:=1/|\hat{\beta}_{init,k}(\lambda)|$ for
$k=1,\ldots,p$. Unless specified otherwise, we employ these weights.

\section{Theoretical Results} \label{Sec3}
In the high-dimensional setting with $p \gg N_T$, the theory for penalized
estimation based on convex loss functions with an $\ell_1$-penalty is well
studied, see for example \cite{Geer08}. From (\ref{five}) and (\ref{opt2})
we see that we are dealing with a non-convex loss function, due to the
variance parameters $\bm\theta$ and $\sigma^2$, and a convex
$\ell_1$-penalty. To the best of our knowledge, only \cite{Stad09} consider high-dimensional non-convex $\ell_1$-penalized smooth likelihood
problems. In this section, we build upon the theory presented in
\cite{Stad09} and extend their results to prove an oracle
inequality for the adaptive $\ell_1$-penalized estimator (\ref{opt3}).\\
We use the following framework and notation. Let $i=1,\ldots,N$ as before
and $n_i \equiv n>1$ the same for all $i$. Denote by $\bm{y}_i \in
\mathcal{Y} \subset \mathbb{R}^n$ the response variable. Let $\bm{X}_i$ be
the fixed covariates in some space $\mathcal{X}^n \subset \mathbb{R}^{n
  \times p}$ and $\bm{Z}_i \subset \bm{X}_i$. The latter can be assumed
without loss of generality, since we can assign to every variable a fixed
effect being equal to zero. Define the parameter $\bm{\phi}^T :=
(\bm{\beta}^T,\bm\theta^T,2 \log \sigma)
=(\bm{\beta}^T,\bm\theta^T,\varrho) = (\bm{\beta}^T,\bm\eta^T) \in \mathbb{R}^{p+q^*+1}$
and denote by $\bm{\phi}_0$ the true parameter vector. For a constant $0 < K < \infty$, consider the parameter space
\begin{equation} \label{211209}
\bm{\Phi} = \{\bm\phi^T=(\bm{\beta}^T,\bm\eta^T) : \sup_{\bm{x} \in \mathcal{X}}|\bm{x}^T\bm{\beta}|
\le K, \|\bm\eta\|_{\infty} \le K, \bm\Psi>0 \} \in \mathbb{R}^{p+q^*+1},
\end{equation}
where $\|\bm\eta\|_{\infty}=\max_{l}|\eta_l|$. We modify the
estimators in (\ref{opt2}) and (\ref{opt3}) by restricting the solution to
lie in the parameter space $\bm\Phi$:
\small
\begin{equation} \label{210909}
\hat{\bm{\phi}} := \argmin_{\bm{\phi} \in \bm\Phi}
\Bigg\{\frac{1}{2} \log|\bm{V}|
+\frac{1}{2} (\bm{y}-\bm{X}\bm{\beta})^T\bm{V}^{-1}(\bm{y}-\bm{X}\bm{\beta}) + \lambda \sum_{k=1}^p |\beta_k|
\Bigg\},
\end{equation}
\begin{equation} \label{261010}
\hat{\bm{\phi}}_{weight} := \argmin_{\bm{\phi} \in \bm\Phi}
\Bigg\{\frac{1}{2} \log|\bm{V}|
+\frac{1}{2} (\bm{y}-\bm{X}\bm{\beta})^T\bm{V}^{-1}(\bm{y}-\bm{X}\bm{\beta}) + \lambda \sum_{k=1}^p w_k|\beta_k|
\Bigg\}.
\end{equation}
\normalsize
Now, let $f_{\bm{\phi},\bm{X}_i,\bm{Z}_i}$ be the Gaussian density for $\bm{y}_i$ with respect to the above parametrization. Since we use the
negative log-likelihood as loss function, the excess risk coincides with
the Kullback-Leibler distance:
\begin{equation}
\mathcal{E}_{\bm{X},\bm{Z}}(\bm{\phi} | \bm{\phi}_0) = \int \log
\Bigg(\frac{f_{\bm{\phi}_0,\bm{X},\bm{Z}}}{f_{\bm{\phi},\bm{X},\bm{Z}}}\Bigg)f_{\bm{\phi}_0,\bm{X},\bm{Z}}
d\mu,
\end{equation}
where $\mu$ denotes the Lebesgue measure, and we define the average excess risk as
\begin{equation*}
 \overline{\mathcal{E}}_{\bm{X}_1,\ldots,\bm{X}_N,\bm{Z}_1,\ldots,\bm{Z}_N}(\bm{\phi} |
 \bm{\phi}_0) =\frac{1}{N} \sum_{i=1}^N
 \mathcal{E}_{\bm{X}_i,\bm{Z}_i}(\bm{\phi}|\bm{\phi}_0).
\end{equation*}
In the sequel, we drop the indices $_{\bm{X},\bm{Z}}$ and
$_{\bm{X}_1,\ldots,\bm{X}_N,\bm{Z}_1,\ldots,\bm{Z}_N}$, respectively.\\

\subsection{Consistency for the $\ell_1$-penalized estimator}
We require only one condition for consistency. It is a condition on the
random-effects design matrices $\bm{Z}_i$.
\begin{Ass} \textbf{} \label{Ass1}
The eigenvalues of
$\bm{Z}^T_i\bm{Z}_i$, denoted by
$\big(\nu_j^{(i)}\big)_{j=1}^q$  for $i=1,\ldots,N$, are bounded: $\nu_{j}^{(i)} 
  \le K < \infty$ for all $i$ and $j$, with $K$ from (\ref{211209}).
\end{Ass}

Now we consider a triangular scheme of observations from (\ref{model2}):
\begin{equation} \label{triarray}
\bm{y}_i = \bm{X}_i \bm{\beta}_N + \bm{Z}_i \bm{b}_i + \bm{\varepsilon}_i
\hspace{1cm} i=1,\ldots,N , 
\end{equation}
where the parameters $\bm{\beta}_N$ and $\bm{\eta}_N$ are allowed to depend on
$N$. We study consistency as $N \to \infty$ but the group size $n$ is
fixed. Moreover, let us use the notation $a\vee b := \max\{a,b\}$.
\begin{theorem} {\bf(Consistency)}  \label{AppA} \\
Consider model (\ref{triarray}) and the estimator (\ref{210909}). Under
Assumption \ref{Ass1} and assuming
\begin{equation*}
\|\bm{\beta}_{0,N}\|_1 = o\Bigg(\sqrt{\frac{N}{\log^4(N)\log(p\vee N)}}\Bigg), \quad \lambda_N = C
\sqrt{\frac{\log^4(N)\log(p\vee N)}{N}}
\end{equation*}
for some $C > 0$, any global minimizer $\hat{\bm{\phi}}$ as in (\ref{210909})
satisfies $\bar{\mathcal{E}}(\hat{\bm{\phi}}|\bm{\phi}_0) =
o_P(1)$ as $N \rightarrow \infty$.
\end{theorem}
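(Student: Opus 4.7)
The plan is to adapt the slow-rate Lasso consistency argument behind (\ref{f1}) to the non-convex Gaussian mixed-effects log-likelihood, using the general $\ell_1$-penalized smooth-likelihood framework of \cite{Stad09} together with the boundedness built into the parameter space $\bm\Phi$. First I would write the basic inequality. Let
\[
\rho_n(\bm\phi) := \frac{1}{N}\sum_{i=1}^{N}\Bigl\{\tfrac{1}{2}\log|\bm V_i| + \tfrac{1}{2}(\bm y_i - \bm X_i\bm\beta)^T\bm V_i^{-1}(\bm y_i - \bm X_i\bm\beta)\Bigr\}
\]
denote the normalized empirical loss and $\Delta_n(\bm\phi) := \rho_n(\bm\phi) - \mathbb{E}\rho_n(\bm\phi)$ its centered version. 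Since $\hat{\bm\phi}$ minimizes $\rho_n(\bm\phi) + \lambda_N\|\bm\beta\|_1$ over $\bm\Phi$ and $\bm\phi_0\in\bm\Phi$, comparing values at $\hat{\bm\phi}$ and $\bm\phi_0$ together with $\mathbb{E}\rho_n(\bm\phi)-\mathbb{E}\rho_n(\bm\phi_0) = \bar{\mathcal{E}}(\bm\phi|\bm\phi_0)$ yields
\[
\bar{\mathcal{E}}(\hat{\bm\phi}|\bm\phi_0) \le \Delta_n(\bm\phi_0) - \Delta_n(\hat{\bm\phi}) + \lambda_N\bigl(\|\bm\beta_0\|_1 - \|\hat{\bm\beta}\|_1\bigr).
\]

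Second, I would establish a uniform deviation bound on $\Delta_n-\Delta_n(\bm\phi_0)$ over $\bm\Phi$. The definition (\ref{211209}) together with Assumption~\ref{Ass1} forces the eigenvalues of $\bm V_i(\bm\theta,\sigma^2) = \bm Z_i\bm\Psi_{\bm\theta}\bm Z_i^T + \sigma^2\bm I_{n_i}$ to lie in a fixed compact subset of $(0,\infty)$, so $\bm V_i^{-1}$ and $\log|\bm V_i|$ are uniformly Lipschitz in $\bm\eta$ while $|\bm x^T\bm\beta|$ is bounded by $K$. Invoking the empirical-process machinery of \cite{Stad09} for smooth non-convex $\ell_1$-penalized likelihoods, my target is
\[
\sup_{\bm\phi\in\bm\Phi}\frac{|\Delta_n(\bm\phi) - \Delta_n(\bm\phi_0)|}{1 + \|\bm\beta-\bm\beta_0\|_1} \le \tfrac{\lambda_N}{2}
\]
with probability tending to one. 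The $\sqrt{\log(p\vee N)/N}$ factor in $\lambda_N$ comes from Gaussian concentration of the part of the score linear in $\bm y_i$ combined with a union bound over the $p$ coordinates of $\bm\beta$; the extra polylogarithmic factor $\log^4(N)$ absorbs subexponential concentration of the quadratic forms in $\bm y_i$ together with the entropy of the low-dimensional compact set $\{\bm\eta:\|\bm\eta\|_\infty\le K\}$.

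Combining the basic inequality with this uniform bound gives, on the good event,
\[
\bar{\mathcal{E}}(\hat{\bm\phi}|\bm\phi_0) \le \tfrac{\lambda_N}{2}\bigl(1 + \|\hat{\bm\beta}-\bm\beta_0\|_1\bigr) + \lambda_N\bigl(\|\bm\beta_0\|_1 - \|\hat{\bm\beta}\|_1\bigr) \le \tfrac{\lambda_N}{2} + \tfrac{3\lambda_N}{2}\|\bm\beta_0\|_1,
\]
where in the last step I apply the triangle inequality and drop the nonpositive term $-\tfrac{\lambda_N}{2}\|\hat{\bm\beta}\|_1$. The sparsity hypothesis $\|\bm\beta_{0,N}\|_1 = o(1/\lambda_N)$ together with $\lambda_N\to 0$ then forces the right-hand side to be $o_P(1)$, as required.

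The hard step is the uniform empirical-process bound in the second paragraph. Unlike in the linear Lasso where the score at $\bm\phi_0$ is a Gaussian linear functional of the noise, here it contains quadratic forms in the Gaussian residuals together with nonlinear dependence on $\bm\eta$ through $\bm V_i^{-1}$ and $\log|\bm V_i|$. Controlling this uniformly calls for either truncating $\bm y_i$ at a level of order $\log N$ or invoking Hanson--Wright type subexponential inequalities, and careful bookkeeping of the truncation errors, the entropy integrals over $\bm\Phi$, and a peeling argument over $\|\bm\beta\|_1$ is precisely what inflates the clean $\sqrt{\log p / N}$ Lasso rate into the $\sqrt{\log^4(N)\log(p\vee N)/N}$ scaling that appears in $\lambda_N$.
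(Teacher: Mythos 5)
Your proposal follows essentially the same route as the paper's Appendix A: the basic inequality from optimality of $\hat{\bm\phi}$ over $\bm\Phi$, a uniform bound on the increments of the centered empirical loss normalized by the $\ell_1$-distance (the paper's Lemma 2, quoted from \cite{Stad09}, whose denominator $(\|\bm\beta-\bm\beta_0\|_1+\|\bm\eta-\bm\eta_0\|_2)\vee\lambda_0$ is dominated up to constants by your $1+\|\bm\beta-\bm\beta_0\|_1$ on the bounded set $\bm\Phi$), and absorption of the remaining terms using $\lambda_N\to 0$ and $\lambda_N\|\bm\beta_{0,N}\|_1=o(1)$. The one step you defer as ``the hard step'' --- controlling the score envelope $G_1(\bm y_i)=c_1+c_2\|\bm y_i\|_2+c_3\|\bm y_i\|_2^2$ after truncation at level $M_N\asymp\log N$ --- is precisely what the paper's Lemma 3 supplies, via the non-central $\chi^2$ decomposition of $\|\bm y_i\|_2^2$ and a Markov/H\"older argument, so your sketch of that step (truncation of the Gaussian quadratic forms at a $\log N$ level) is the right one.
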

A proof is given in the Appendix in the Supporting Information. The condition on $\|\bm{\beta}_{0,N}\|_1$ is
a sparsity condition on the true underlying fixed-effects coefficients.

\subsection{Oracle inequality for the adaptive $\ell_1$-penalized estimator}
We now present an oracle optimality result in non-asymptotic
form for the adaptive estimator (and thereby covering also the non-adaptive
case). Preliminary, we introduce some notation and two further 
assumptions.
\begin{Ass} \textbf{} \label{Ass2}
\begin{itemize}
\item [(a)] Let $\big(\omega_j^{(i)}\big)_{j=1}^n$ be the eigenvalues of
$\bm{Z}_i\bm\Psi \bm{Z}_i^T$ for $i=1,\ldots,N$. At least two eigenvalues
are different, i.e. for all $i$ $ 
\exists  j_1\ne j_2 \in \{1,\ldots,n \}$ such that $\omega_{j_1}^{(i)} \neq \omega_{j_2}^{(i)}$.
\item [(b)] For $i=1,\ldots,N$, the matrices $\bm\Omega_i$ defined by
\begin{equation*}
(\bm\Omega_i)_{r,s}=
  \trace\bigg(\bm{V}_i^{-1}\frac{\partial \bm{V}_i}{\partial
    \phi_{p+r}}\bm{V}_i^{-1}\frac{\partial \bm{V}_i}{\partial
    \phi_{p+s}}\bigg) \quad r,s=1,\ldots,q^*+1
\end{equation*}
 are strictly positive definite.
\end{itemize}
\end{Ass}
\emph{Remark.} In the special case $\bm\Psi=\theta^2\bm{I}$, Assumption
\ref{Ass2} (b) automatically holds.\\
Let $S(\bm{\beta}) = \{1 \le k \le p : \beta_k \ne 0 \}$ be the active
set of $\bm{\beta}$, i.e. the set of non-zero coefficients, and $\bm{\beta}_{\mathcal{K}} =
\{\beta_k : k \in \mathcal{K}  \}$ for $\mathcal{K} \subset \{1,\ldots,p\}$. We denote by
$S_0=S(\bm{\beta}_0)$ the true active set and by $s_0=|S_0|$ its
cardinality.  Write $\bm{X}_i^T = (\bm{x}_{1}^i,\ldots,\bm{x}_{n}^i)$ and
define 
\begin{equation} \nonumber
\bm\Sigma_{N,n}:=\frac{1}{N} \sum_{i=1}^N \sum_{j=1}^n \bm{x}_j^i(\bm{x}_j^i)^T \quad
\in \mathbb{R}^{p \times p}
\end{equation}

\begin{Ass}{\bf(Restricted Eigenvalue Condition)} \label{Ass3} \\ 
There exists a constant $\kappa \ge 1$, such that for all $\bm{\beta} \in
\mathbb{R}^{p}$ satisfying $\|\bm{\beta}_{S_0^c}\|_1 \le 6 \|\bm{\beta}_{S_0} \|_1$
it holds that $\|\bm{\beta}_{S_0}\|_2^2 \le \kappa^2 \bm{\beta}^T \bm\Sigma_{N,n} \bm\beta$.
\end{Ass}
A discussion of this assumption can be found in \cite{Bickel09} and
\cite{GeerBuhl09}. Define
\begin{equation} \label{11.08}
\lambda_0 = M_N \log N  \sqrt{\frac{\log(p \vee N)}{N}} ,
\end{equation}
where $M_N$ is of order $\log N$ and an exact definition is
given in the proof of Theorem \ref{AppA}. For any $T \ge a_1$, let
$\cal J$ be a set defined by the underlying empirical process (see (A.6) in
the Supporting Information). It is
shown in the proof of Theorem \ref{AppA} that the set $\cal J$ has
large probability,
\begin{equation*}
\mathbb{P}[\mathcal{J}] \ge 1-a_2\exp\Big[-\frac{T^2 \log^2 N \log(p \vee N)}{a_3^2} \Big] - \frac{\rho}{\log N}\frac{1}{N^{1-2\varepsilon}}
\end{equation*}
for $N$ sufficiently large and some constants $a_1, a_2, a_3, \varepsilon,
\rho>0$, see Lemma 2 and 3 in the Appendix A (in the Supporting
Information).\\

At this point, we could conclude an oracle result in the way of
\cite{Stad09}. However, we extend that result and present an
oracle inequality involving $\| \hat{\bm\beta} - \bm\beta_0\|_1$ instead of
$\|\hat{\bm\beta}_{S_0^c}\|_1$  for the $\ell_1$-penalized as well as the
adaptive $\ell_1$-penalized estimator.
\begin{theorem} \textbf{(Oracle result)} \label{weight.theorem}\\
Consider the weighted $\ell_1$-penalized estimator (\ref{261010}). Suppose that for some $\delta > 0$,
\begin{equation*}
w_k \begin{cases}
\le 1/ \delta & k \in S_0 , \\
\ge 1/ \delta & k \notin S_0  . \\
\end{cases}
\end{equation*}\\
Under Assumptions \ref{Ass1}, \ref{Ass2} and \ref{Ass3}, and 
for $\lambda \ge 2 T \delta \lambda_0$,
we have on the set $\cal J$ defined in (A.6), 
$$ \bar{\mathcal{E}} ( \hat{\bm\phi}_{weight} \vert \bm\phi_0 ) +
 2( \lambda/\delta  - T  \lambda_0 ) \| \hat{\bm\beta}_{weight} - \bm\beta_0 \|_1 \le
 9 ( \lambda/\delta  + T  \lambda_0 )^2 c_0^2 \kappa^2 s_0 . $$
\end{theorem}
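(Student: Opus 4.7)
My plan is to follow the standard oracle inequality template of \citet{Bickel09} and \citet{Stad09}, adapted to the weighted penalty and the non-convex mixed-effects likelihood. The starting point is the basic inequality obtained from $\hat{\bm\phi}_{weight}$ being a minimizer of (\ref{261010}): this gives
\begin{equation*}
-\ell(\hat{\bm\phi}_{weight}) + \lambda \sum_k w_k |\hat\beta_{weight,k}| \le -\ell(\bm\phi_0) + \lambda \sum_k w_k |\beta_{0,k}|.
\end{equation*}
I will rearrange this into the form $\bar{\mathcal{E}}(\hat{\bm\phi}_{weight}|\bm\phi_0) + \lambda\sum_k w_k|\hat\beta_{weight,k}| \le \lambda\sum_k w_k|\beta_{0,k}| + \nu_N(\hat{\bm\phi}_{weight},\bm\phi_0)$, where $\nu_N$ is the (centered) empirical process of the negative log-likelihood difference. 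On the set $\mathcal{J}$ introduced before the theorem, the key empirical-process lemmas referenced in the excerpt give a linear bound $|\nu_N(\hat{\bm\phi}_{weight},\bm\phi_0)| \le T\lambda_0 \|\hat{\bm\beta}_{weight}-\bm\beta_0\|_1$.

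Next I would exploit the weight condition. Writing $\hat{\bm\beta} \equiv \hat{\bm\beta}_{weight}$ for brevity, the inequalities $w_k \le 1/\delta$ on $S_0$ and $w_k \ge 1/\delta$ on $S_0^c$ together with the triangle inequality give
\begin{equation*}
\lambda \sum_k w_k|\hat\beta_k| - \lambda \sum_k w_k |\beta_{0,k}| \ge \tfrac{\lambda}{\delta}\|\hat{\bm\beta}_{S_0^c}\|_1 - \tfrac{\lambda}{\delta}\|\hat{\bm\beta}_{S_0}-\bm\beta_{0,S_0}\|_1.
\end{equation*}
Substituting into the basic inequality and splitting $\|\hat{\bm\beta}-\bm\beta_0\|_1 = \|\hat{\bm\beta}_{S_0^c}\|_1 + \|\hat{\bm\beta}_{S_0}-\bm\beta_{0,S_0}\|_1$ on the empirical-process side yields
\begin{equation*}
\bar{\mathcal{E}}(\hat{\bm\phi}_{weight}|\bm\phi_0) + (\lambda/\delta - T\lambda_0)\|\hat{\bm\beta}_{S_0^c}\|_1 \le (\lambda/\delta + T\lambda_0)\|\hat{\bm\beta}_{S_0}-\bm\beta_{0,S_0}\|_1.
\end{equation*}
The condition $\lambda \ge 2T\delta\lambda_0$ forces the ratio of the two parenthesized factors to be at most $3$, so $\|\hat{\bm\beta}_{S_0^c}\|_1 \le 3\|\hat{\bm\beta}_{S_0}-\bm\beta_{0,S_0}\|_1 \le 6 \|\hat{\bm\beta}_{S_0}-\bm\beta_{0,S_0}\|_1$, placing $\hat{\bm\beta}-\bm\beta_0$ in the cone over which Assumption \ref{Ass3} is available.

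To obtain the full $\ell_1$-term on the left, I would then add $(\lambda/\delta - T\lambda_0)\|\hat{\bm\beta}_{S_0}-\bm\beta_{0,S_0}\|_1$ to both sides, reducing the problem to bounding $\|\hat{\bm\beta}_{S_0}-\bm\beta_{0,S_0}\|_1$ by the excess risk. This is where I would combine two ingredients: (i) Cauchy--Schwarz, $\|\hat{\bm\beta}_{S_0}-\bm\beta_{0,S_0}\|_1 \le \sqrt{s_0}\,\|\hat{\bm\beta}_{S_0}-\bm\beta_{0,S_0}\|_2$; (ii) the restricted eigenvalue assumption, giving $\|\hat{\bm\beta}_{S_0}-\bm\beta_{0,S_0}\|_2 \le \kappa (\hat{\bm\beta}-\bm\beta_0)^T \bm\Sigma_{N,n}(\hat{\bm\beta}-\bm\beta_0)^{1/2}$; and (iii) a quadratic lower bound of the form $c_0^{-2}(\hat{\bm\beta}-\bm\beta_0)^T\bm\Sigma_{N,n}(\hat{\bm\beta}-\bm\beta_0) \le \bar{\mathcal{E}}(\hat{\bm\phi}_{weight}|\bm\phi_0)$ which arises because the Gaussian Kullback--Leibler divergence on the stratum $\bm\beta$ gives $\tfrac{1}{2N}\sum_i(\bm{X}_i(\hat{\bm\beta}-\bm\beta_0))^T\bm{V}_i^{-1}(\bm{X}_i(\hat{\bm\beta}-\bm\beta_0))$, and the eigenvalues of $\bm{V}_i$ are uniformly controlled via Assumptions \ref{Ass1}--\ref{Ass2} and the parameter space $\bm\Phi$. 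Chaining these three bounds and then applying $2ab \le \tfrac{1}{2}a^2 + 2b^2$ with $a = \sqrt{\bar{\mathcal{E}}}$ and $b$ proportional to $(\lambda/\delta+T\lambda_0)\kappa c_0\sqrt{s_0}$ absorbs half the excess risk into the right-hand side; multiplying the surviving inequality by $2$ delivers the claimed bound with constant $9$.

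\textbf{Main obstacle.} The only non-routine ingredient is the quadratic lower bound of $\bar{\mathcal{E}}$ in terms of $(\hat{\bm\beta}-\bm\beta_0)^T\bm\Sigma_{N,n}(\hat{\bm\beta}-\bm\beta_0)$ with the correct constant $c_0$; the $\bm\eta$-part of the log-likelihood couples with the $\bm\beta$-part, so one must argue that the cross-terms are non-negative or negligible, and that $\bm{V}_i^{-1}$ is bounded below in operator norm uniformly in $\bm\phi \in \bm\Phi$ (this is where Assumptions \ref{Ass1} and \ref{Ass2}(a), together with the compactness $\|\bm\eta\|_\infty \le K$, do the work). Everything else is essentially bookkeeping once the empirical-process bound on $\mathcal{J}$ and this margin condition are in place.
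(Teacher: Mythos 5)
Your overall skeleton (basic inequality, exploiting the weights via the triangle inequality, the cone condition, restricted eigenvalue plus Cauchy--Schwarz, and absorption of half the excess risk) matches the core of the paper's argument. But there is a genuine gap at the very first step: you assert that on $\mathcal{J}$ the empirical process satisfies $|\nu_N(\hat{\bm\phi}_{weight},\bm\phi_0)| \le T\lambda_0\|\hat{\bm\beta}_{weight}-\bm\beta_0\|_1$. That is not what $\mathcal{J}$ (as defined in (A.6)) provides. The likelihood depends on the variance parameters $\bm\eta$ as well as on $\bm\beta$, so the increments of the empirical process cannot be Lipschitz in the $\bm\beta$-distance alone; the actual bound is $T\lambda_0\bigl[(\|\hat{\bm\beta}-\bm\beta_0\|_1+\|\hat{\bm\eta}-\bm\eta_0\|_2)\vee\lambda_0\bigr]$. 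Dropping the $\|\hat{\bm\eta}-\bm\eta_0\|_2$ term and the $\vee\,\lambda_0$ floor removes exactly the two complications that force the paper into a three-case analysis: (i) the case where the total parameter distance is below $\lambda_0$, and (ii) the case where the $\bm\eta$-term dominates the right-hand side, which can only be handled because the quadratic margin inequality (B.2) contains a $\|\hat{\bm\eta}-\bm\eta_0\|_2^2/c_0^2$ contribution in addition to the $(\hat{\bm\beta}-\bm\beta_0)^T\bm\Sigma_{N,n}(\hat{\bm\beta}-\bm\beta_0)/c_0^2$ part you describe. Your margin bound (iii) covers only the $\bm\beta$-part, so even if you restored the correct empirical-process bound you could not absorb the $\bm\eta$-contribution.

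A symptom of the same problem is the constant: with your simplified empirical-process bound, adding $(\lambda/\delta-T\lambda_0)\|\hat{\bm\beta}_{S_0}-\bm\beta_{0,S_0}\|_1$ to both sides gives a factor $2(\lambda/\delta+T\lambda_0)$ on the right, hence a final constant $4$, not $9$. The paper's factor $3(\lambda/\delta+T\lambda_0)$ --- and hence the $9$ in the statement, as well as the constant $6$ appearing in Assumption 3 --- arises precisely because in the $\bm\beta$-dominated case the right-hand side already carries $2(\lambda/\delta+T\lambda_0)\|\hat{\bm\beta}_{S_0}-\bm\beta_0\|_1$, the extra copy coming from bounding the $T\lambda_0\|\hat{\bm\eta}-\bm\eta_0\|_2$ term. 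To repair the proof you need to (a) state the empirical-process bound with the $\bm\eta$-term and the $\vee\,\lambda_0$, (b) include $\|\hat{\bm\eta}-\bm\eta_0\|_2^2/c_0^2$ in the margin inequality, and (c) carry out the three-case split.
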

The proof is given in the Appendix B. Application of Theorem
\ref{weight.theorem} with all weights equal to one ($\delta=1$) gives an oracle result
for the $\ell_1$-penalized estimator, which we will use as initial values for the adaptive Lasso procedure.

\begin{corollary}\label{init.corollary}
Let
\begin{equation*}
\hat{\bm\phi}_{init}:= (\hat{\bm\beta}_{init},\hat{\bm\theta}_{init},\hat\varrho_{init}):=
\argmin_{\bm\phi \in \bm\Phi}
Q_{\lambda_{init}}^{1,\ldots,1 }({\bm\beta},\bm\theta,\varrho),
\end{equation*}
be the initial estimator in (\ref{261010}) (i.e., the estimator with all the weights equal to one).
Under Assumptions \ref{Ass1}, \ref{Ass2} and \ref{Ass3}, and 
for $\lambda_{init} \ge 2 T  \lambda_0$,
we have on ${\cal J}$,
\begin{equation}\label{init.equation} {\bar{\cal E}} ( \hat{\bm\phi}_{init} \vert \bm\phi_0 ) +
 2( \lambda_{init}  - T  \lambda_0 ) \| \hat{\bm\beta}_{init} - \bm\beta_0
 \|_1 \le  9 ( \lambda_{init}  + T  \lambda_0 )^2 c_0^2 \kappa^2 s_0 . 
\end{equation}
\end{corollary}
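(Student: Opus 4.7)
The plan is to derive Corollary~\ref{init.corollary} as an immediate specialization of Theorem~\ref{weight.theorem}. The weighted oracle inequality was deliberately stated for generic weights $w_1,\ldots,w_p$ so as to cover both the adaptive and the non-adaptive cases simultaneously, and the initial estimator $\hat{\bm\phi}_{init}$ is by construction exactly the minimizer in (\ref{261010}) with $w_k \equiv 1$.

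First I would verify the hypotheses of Theorem~\ref{weight.theorem} with the choice $\delta = 1$ and $w_k = 1$ for all $k \in \{1,\ldots,p\}$. The weight condition
\[
w_k \le 1/\delta \text{ for } k \in S_0, \qquad w_k \ge 1/\delta \text{ for } k \notin S_0
\]
collapses to $w_k \le 1$ and $w_k \ge 1$ respectively, which both hold with equality. The regularization condition $\lambda \ge 2T\delta\lambda_0$ reduces to $\lambda_{init} \ge 2T\lambda_0$, which is precisely what we have assumed. Assumptions~\ref{Ass1}, \ref{Ass2}, and \ref{Ass3} are inherited unchanged from the hypotheses of the corollary, and the event $\mathcal{J}$ is the same one defined in (A.6).

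Having checked the hypotheses, I would then simply substitute $\delta = 1$ and the identification $\hat{\bm\phi}_{weight} = \hat{\bm\phi}_{init}$, $\hat{\bm\beta}_{weight} = \hat{\bm\beta}_{init}$, $\lambda = \lambda_{init}$ into the conclusion of Theorem~\ref{weight.theorem}:
\[
\bar{\mathcal{E}}(\hat{\bm\phi}_{weight}\mid \bm\phi_0) + 2(\lambda/\delta - T\lambda_0)\|\hat{\bm\beta}_{weight} - \bm\beta_0\|_1 \le 9(\lambda/\delta + T\lambda_0)^2 c_0^2 \kappa^2 s_0.
\]
With $\delta = 1$ this becomes exactly (\ref{init.equation}), completing the proof.

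Since the corollary is obtained by a mere substitution, there is no genuine obstacle at this level; all of the real work has been absorbed into the proof of Theorem~\ref{weight.theorem} (carried out in Appendix~B) and into the control of $\mathbb{P}[\mathcal{J}]$ used to guarantee that the deterministic bound holds on a high-probability event. The only thing to be careful about is bookkeeping: one must ensure that the constants $T$, $\lambda_0$, $c_0$, $\kappa$ entering (\ref{init.equation}) are interpreted identically to those in Theorem~\ref{weight.theorem}, and that the requirement $\kappa \ge 1$ from Assumption~\ref{Ass3} is in force so that $s_0 \kappa^2$ on the right-hand side has the intended meaning.
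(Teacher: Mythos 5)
Your proposal is correct and matches the paper's own argument: the paper explicitly obtains Corollary \ref{init.corollary} by applying Theorem \ref{weight.theorem} with all weights equal to one and $\delta = 1$, exactly as you do. The verification that the weight condition holds with equality and that $\lambda_{init} \ge 2T\lambda_0$ is the specialization of $\lambda \ge 2T\delta\lambda_0$ is all that is needed.
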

  It is clear that the $\ell_1$-estimation error bound implies a
  bound for the $\ell_{\infty}$ estimation error as well. When the underlying
  true coefficients $\beta_{0,k}$, $ k \in S_0$ are sufficiently much larger in absolute value
  than the $\ell_{\infty}$-estimation error bound, 
   one can perfectly
  distinguish between active and non-active set. This argument is applied in the next
  corollary to the adaptive Lasso with estimated weights.

\begin{corollary}\label{adap.corollary}
Let
$$\hat{\bm\phi}_{adap} :=(\hat{\bm\beta}_{adap},\hat{\bm\theta}_{adap},\hat\varrho_{adap}):=
\argmin_{\bm\phi \in \bm\Phi}
Q_{\lambda_{adap}}^{w_1,\ldots,w_p }({\bm\beta},\bm\theta,\varrho),
$$
be the adaptive estimator with weights $w_k = 1/ |\hat \beta_{{init},k}|$,
$k=1 , \ldots , p $ as in (\ref{210909}). 
Assume that for all $ k \in S_0$, 
\begin{equation}\label{betamin}
|\beta_{0,k} | \ge 2 \delta_{init}, 
\end{equation}
where
$$ \delta_{init} := 
\frac {9 ( \lambda_{init}  + T  \lambda_0 )^2 c_0^2 \kappa^2 s_0 }{
2( \lambda_{init}  - T  \lambda_0 )} \ge \|\hat{\bm\beta}_{init} -
\bm\beta_0 \|_1 $$ is a bound of the $\ell_1$-estimation error of the
initial $\hat{\bm\beta}_{init}$.
Suppose moreover that Assumptions \ref{Ass1}, \ref{Ass2} and \ref{Ass3} are met.
Then,  for $\lambda_{adap} \ge 2 T \delta_{init} \lambda_0$,
and on the set ${\cal J}$,
\begin{flalign} \label{adap.equation} \nonumber
&{\bar{\cal E}} ( \hat{\bm\phi}_{adap} \vert \bm\phi_0 ) +
 2( \lambda_{adap}/\delta_{init}  - T  \lambda_0 ) \| \hat{\bm\beta}_{adap}
 - \bm\beta_0 \|_1 \\
& \le
 9 ( \lambda_{adap}/\delta_{init}  + T  \lambda_0 )^2 c_0^2 \kappa^2 s_0 .
\end{flalign}
\end{corollary}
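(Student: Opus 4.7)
The plan is to reduce Corollary \ref{adap.corollary} to a direct application of Theorem \ref{weight.theorem} with the choice $\delta = \delta_{init}$. The only real content is to verify that the data-dependent weights $w_k = 1/|\hat\beta_{init,k}|$ satisfy the separation property required by Theorem \ref{weight.theorem}, namely $w_k \le 1/\delta_{init}$ for $k \in S_0$ and $w_k \ge 1/\delta_{init}$ for $k \notin S_0$, on the event $\mathcal{J}$. Once this is established, the conclusion of Theorem \ref{weight.theorem} immediately becomes (\ref{adap.equation}) after substituting $\delta = \delta_{init}$, and the hypothesis $\lambda_{adap} \ge 2T\delta_{init}\lambda_0$ matches the required lower bound on the tuning parameter.

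First I would invoke Corollary \ref{init.corollary} on the set $\mathcal{J}$, which gives the $\ell_1$ bound $\|\hat{\bm\beta}_{init} - \bm\beta_0\|_1 \le \delta_{init}$. Since $\|\cdot\|_\infty \le \|\cdot\|_1$, this transfers to the coordinate-wise bound $|\hat\beta_{init,k} - \beta_{0,k}| \le \delta_{init}$ for every $k = 1,\ldots,p$. For $k \notin S_0$ we have $\beta_{0,k} = 0$, so $|\hat\beta_{init,k}| \le \delta_{init}$, which yields $w_k = 1/|\hat\beta_{init,k}| \ge 1/\delta_{init}$ (with the convention that if $\hat\beta_{init,k} = 0$ then $w_k = +\infty$, forcing $\hat\beta_{adap,k} = 0$, consistent with the required inequality). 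For $k \in S_0$, the reverse triangle inequality combined with the beta-min condition (\ref{betamin}) gives
\begin{equation*}
|\hat\beta_{init,k}| \ge |\beta_{0,k}| - \delta_{init} \ge 2\delta_{init} - \delta_{init} = \delta_{init},
\end{equation*}
so $w_k \le 1/\delta_{init}$. This verifies both weight conditions with $\delta = \delta_{init}$.

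With these in hand, Theorem \ref{weight.theorem} applied to $\hat{\bm\phi}_{adap}$ yields
\begin{equation*}
\bar{\mathcal{E}}(\hat{\bm\phi}_{adap}\vert \bm\phi_0) + 2(\lambda_{adap}/\delta_{init} - T\lambda_0)\|\hat{\bm\beta}_{adap} - \bm\beta_0\|_1 \le 9(\lambda_{adap}/\delta_{init} + T\lambda_0)^2 c_0^2 \kappa^2 s_0,
\end{equation*}
which is exactly (\ref{adap.equation}). I do not expect any real obstacle here: the proof is essentially a bookkeeping argument that packages the $\ell_1$-error bound from Corollary \ref{init.corollary} together with the beta-min condition to certify the weights, and then reuses the oracle inequality of Theorem \ref{weight.theorem} as a black box. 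The only mildly delicate point is the handling of exactly-zero initial estimates for irrelevant coordinates, which is dispatched by the $+\infty$ convention above.
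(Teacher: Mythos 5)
Your proposal is correct and follows essentially the same route the paper intends: the paper only sketches this argument in the text preceding the corollary (the $\ell_1$ bound from Corollary \ref{init.corollary} yields an $\ell_\infty$ bound, which together with the betamin condition certifies the weight separation $w_k \le 1/\delta_{init}$ on $S_0$ and $w_k \ge 1/\delta_{init}$ off $S_0$, so Theorem \ref{weight.theorem} applies with $\delta = \delta_{init}$). Your write-up fills in the details faithfully, including the correct handling of exactly-zero initial estimates.
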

We call condition (\ref{betamin}) a "betamin" condition.
 It is clearly very restrictive, but allows for an easy derivation
 of the oracle result.
 The "betamin" condition can indeed be substantially refined.
 In 
 \cite{GeerBuhlZhou10}, one can find similar  oracle 
results, and in addition variable selection results, for the
 adaptive Lasso in the linear model, without "betamin" conditions. 
 These results 
 require introducing various versions of restricted eigenvalues
 and sparse eigenvalues, and
 can be generalized to the current setting. 
 Since a full presentation is rather involved, we have
 confined ourselves to the simplest case.
 
 Recall that in (\ref{11.08}), we choose $\lambda_0$ of order $\log^2  N \sqrt {
 \log ( p \vee N) / N} $. When we also choose $\lambda_{init}$ of this order,
 we find, modulo the restricted eigenvalue $\kappa$ and the constants $T$ and $c_0$,
 that the right-hand side of the oracle result (\ref{init.equation}) for the
 initial estimator is of order
 $$ \log^4 N \frac {\log ( p \vee N)} {N} s_0 , $$ 
 and that
 $$\delta_{\rm init} \asymp \log^2  N \sqrt {
\frac { \log ( p \vee N) }{N} } s_0 . $$
 The tuning parameter for the adaptive Lasso can then be taken of order
 $$\lambda_{adap} \asymp \log^4 N \frac {\log ( p \vee N)} {N} s_0 . $$
 The right-hand side (\ref{adap.equation}) of the oracle result for the adaptive
 estimator is then of the same order
  as the one for the initial estimator. 
 
Assuming "betamin" conditions, the results in Corollary
\ref{init.corollary} and \ref{adap.corollary} imply the variable screening
property motivated already in (\ref{f4}). 

\begin{corollary} \textbf{} \label{screen.corollary}
\begin{enumerate}
\item [1)]For the $\ell_1$-penalized (initial) estimator (\ref{210909}) , assume
$$\min_k|\beta_{0,k}| > \delta_{init}=\frac {9 ( \lambda_{init}  + T  \lambda_0 )^2 c_0^2 \kappa^2 s_0 }{
2( \lambda_{init}  - T  \lambda_0 )}.$$ Then, under the assumptions of
Corollary \ref{init.corollary}, on the set $\mathcal{J}$, $$S_0 \subset
\hat{S}_{init} = \{1 \le k \le p : \hat{\beta}_{init,k} \ne 0 \}.$$ 
\item [2)] For the adaptive $\ell_1$-penalized estimator in Corollary
  \ref{adap.corollary}, assume
$$ \min_k |\beta_{0,k} |> \frac {9 ( \lambda_{adap}/\delta_{init}  + T  \lambda_0 )^2 c_0^2 \kappa^2 s_0 }{
2( \lambda_{adap}/\delta_{adap}  - T  \lambda_0 )}.$$ Then, under the
assumptions of Corollary \ref{adap.corollary}, on the set $\mathcal{J}$, $$S_0 \subset
\hat{S}_{adap} = \{1 \le k \le p : \hat{\beta}_{adap,k} \ne 0 \}$$
\end{enumerate}
\end{corollary}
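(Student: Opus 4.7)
\emph{Proof plan for Corollary \ref{screen.corollary}.} The plan is to reduce the screening statement to the $\ell_1$-estimation error bounds already established in Corollaries \ref{init.corollary} and \ref{adap.corollary}, and then invoke the trivial inequality $\|\bm{v}\|_\infty \le \|\bm{v}\|_1$ coupled with a contrapositive argument driven by the ``betamin'' hypothesis.

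First, for part 1), I would start from the oracle inequality (\ref{init.equation}) on the event $\mathcal{J}$. Discarding the nonnegative excess risk term $\bar{\mathcal{E}}(\hat{\bm\phi}_{init} \mid \bm\phi_0)$ on the left-hand side and dividing by $2(\lambda_{init} - T\lambda_0) > 0$ (which is positive by the standing requirement $\lambda_{init} \ge 2T\lambda_0$), I obtain
\begin{equation*}
\|\hat{\bm\beta}_{init} - \bm\beta_0\|_1 \;\le\; \delta_{init}.
\end{equation*}
Since $\|\hat{\bm\beta}_{init} - \bm\beta_0\|_\infty \le \|\hat{\bm\beta}_{init} - \bm\beta_0\|_1 \le \delta_{init}$, it follows coordinatewise that $|\hat{\beta}_{init,k} - \beta_{0,k}| \le \delta_{init}$ for every $k$. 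Now I argue by contradiction: if there existed $k \in S_0$ with $\hat{\beta}_{init,k} = 0$, then $|\beta_{0,k}| = |\hat\beta_{init,k} - \beta_{0,k}| \le \delta_{init}$, contradicting the assumption $\min_{k\in S_0}|\beta_{0,k}| > \delta_{init}$ (note the hypothesis $\min_k|\beta_{0,k}|>\delta_{init}$ is only used for $k \in S_0$). Hence each $k \in S_0$ satisfies $\hat{\beta}_{init,k} \ne 0$, giving $S_0 \subset \hat{S}_{init}$.

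For part 2), I would run exactly the same argument, but starting from the adaptive oracle inequality (\ref{adap.equation}) of Corollary \ref{adap.corollary}. Writing $\delta_{adap}$ for the resulting $\ell_1$-bound
\begin{equation*}
\delta_{adap} \;:=\; \frac{9(\lambda_{adap}/\delta_{init} + T\lambda_0)^2 c_0^2 \kappa^2 s_0}{2(\lambda_{adap}/\delta_{init} - T\lambda_0)},
\end{equation*}
the same rearrangement of (\ref{adap.equation}) yields $\|\hat{\bm\beta}_{adap} - \bm\beta_0\|_1 \le \delta_{adap}$ on $\mathcal{J}$, and the betamin condition of part 2) (whose right-hand side is exactly $\delta_{adap}$) together with the contrapositive step forces $\hat\beta_{adap,k}\ne 0$ for every $k \in S_0$.

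There is no real obstacle here: the whole content of the corollary is the one-line observation that an $\ell_1$ (hence $\ell_\infty$) deviation bound smaller than the smallest nonzero true coefficient cannot zero out an active variable. The only points requiring minor care are (i) verifying that the denominators $\lambda_{init} - T\lambda_0$ and $\lambda_{adap}/\delta_{init} - T\lambda_0$ are strictly positive so that the oracle inequalities can legitimately be rearranged into $\ell_1$-bounds, and (ii) consistently interpreting the (apparent) notational slip $\lambda_{adap}/\delta_{adap}$ in the denominator of part 2) as $\lambda_{adap}/\delta_{init}$, matching the hypothesis under which Corollary \ref{adap.corollary} was proved.
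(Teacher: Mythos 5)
Your proposal is correct and follows essentially the same route as the paper's own proof in Appendix B3: bound $\|\hat{\bm\beta}-\bm\beta_0\|_\infty$ by $\|\hat{\bm\beta}-\bm\beta_0\|_1\le\delta_{init}$ (respectively $\delta_{adap}$) via the oracle inequality, then conclude by contradiction from the betamin condition; your added care about the positivity of the denominators and the $\lambda_{adap}/\delta_{adap}$ typo is consistent with the paper's intent.
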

The proof of Corollary \ref{screen.corollary} is given in the Appendix B.

\begin{center}
\section{Computational algorithm} \label{Sec5}
\end{center}
The algorithm for the estimators in (\ref{opt2}) and (\ref{opt3}) are based on the Block
Coordinate Gradient Descent (BCGD) method from \cite{Tseng07}.\\
The main ideas of our BCGD algorithm are that we cycle through the
coordinates and minimize the objective function $Q_{\lambda}(.)$ with respect
to only one coordinate while keeping the other parameters fixed (i.e. a 
Gauss-Seidel algorithm). In each
such step,  we approximate  $Q_{\lambda}(.)$ by a strictly convex quadratic
function. Then, we calculate a descent direction and we employ an inexact
line search to ensure a decrease in the objective function.\\
BCGD algorithms are used in \cite{Meie08} for the group
Lasso as well as in \cite{Wu08} and \cite{Fried08} for the ordinary
Lasso. We remark that \cite{Meie08} have a block structure
due to the grouped variables whereas we only focus on ungrouped
covariates. Thus the word "block" has no meaning in our context and
consequently, we omit it in the subsequent discussion. Furthermore, the ordinary Lasso has only regression parameters
to cycle through in contrast to our problem involving two kinds of parameters:
fixed regression and variance-covariance parameters.\\
Let us first introduce the notation and give an overview of the algorithm
before proving that our optimization problem achieves numerical
convergence. All the details as well as some computational aspects are
deferred to the Appendix C in the Supporting Information.\\
Let $\bm\phi^T=(\bm\beta^T,\bm\eta^T)\in \mathbb{R}^{p+q^*+1}$ be the
parametrization introduced in the previous section. Define the functions
\begin{equation*}
P(\bm{\phi}) := \sum_{k=1}^p |\beta_k| \quad , \quad
g(\bm{\phi}) := \frac{1}{2} \log|\bm{V}(\bm\eta)|
+ \frac{1}{2}(\bm{y}-\bm{X}\bm{\beta})^T\bm{V}(\bm\eta)^{-1}(\bm{y}-\bm{X}\bm{\beta}).\\
\end{equation*}
Now (\ref{opt2}) can be written as $\hat{\bm{\phi}}_{\lambda} =
\argmin_{\bm{\phi}} Q_{\lambda}(\bm{\phi}) := g(\bm{\phi}) +
\lambda P(\bm{\phi})$. Letting $\bm{e}_j$ the $j$th unit vector, the algorithm can be summarized in the following way:\\
 
\begin{Alg} {\bf(Coordinate Gradient Descent)} \label{Alg1}
\begin{itemize}
\item [(0)] Let $\bm{\phi}^0 \in \mathbb{R}^{p+q^*+1}$ be an initial value.
\end{itemize}
For  $\ell=0,1,2,\ldots$, let $\mathcal{S}^{\ell}$ be the index cycling through the
coordinates $\{1\}$, $\{2\}$,\ldots, $\{p+q^*\}$, $\{p+q^*+1\}$
\begin{itemize}
\item [(1)] Approximate the second derivative $\frac{\partial^2}{\partial
    (\phi_{\mathcal{S}^{\ell}})^2} Q_{\lambda}(\bm{\phi}^{\ell})$ by
  $h^{\ell} > 0$.
\item [(2)]  Calculate the descent direction\\
 $d^{\ell} := \argmin_{d \in \mathbb{R}}  \Big\{ g(\bm{\phi}^{\ell}) +
\frac{\partial}{\partial \phi_{\mathcal{S}^{\ell}}} g(\bm{\phi}^{\ell})d + 1/2 d^2 h^{\ell} +
  \lambda P(\bm{\phi}^{\ell} + d \bm{e}_{\mathcal{S}^{\ell}}) \Big\}$.
\item [(3)] Choose a stepsize $\alpha^{\ell} >0$  and set
  $\bm{\phi}^{\ell+1} = \bm{\phi}^{\ell} + \alpha^{\ell} d^{\ell}
  \bm{e}_{\mathcal{S}^{\ell}}$ such that there is a decrease in the objective function.
\end{itemize}
until convergence.
\end{Alg}

The details of (0) - (3) and further computational issues are
given in the Appendix C of the Supporting Information. An
implementation of the algorithm can be found in the \texttt{R} package
\texttt{lmmlasso}, which is available from the first author's website (\texttt{http://stat.ethz.ch/people/schell}) and will be made available on \texttt{CRAN}.\\

The convergence properties of the CGD algorithm are described in the following theorem.

\begin{theorem} {\bf(Convergence of the CGD algorithm)} \label{bcgthm}\\
If $(\bm{\phi}^{\ell})_{\ell \ge 0}$ is chosen according to Algorithm \ref{Alg1}, then every cluster point of $\{\bm{\phi}^{\ell}\}_{\ell \ge 0}$ is a stationary
point of $Q_{\lambda}(\bm{\phi})$.
\end{theorem}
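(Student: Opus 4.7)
The plan is to invoke the general convergence result for block coordinate gradient descent from \cite{Tseng07} (their Theorem~1(e)), applied to the decomposition $Q_\lambda(\bm\phi) = g(\bm\phi) + \lambda P(\bm\phi)$ with the cyclic (essentially cyclic Gauss-Seidel) coordinate choice $\mathcal{S}^{\ell}$. To do this, I need to verify that our setup fits the hypotheses of their framework: $g$ must be continuously differentiable (with locally Lipschitz gradient), $P$ must be proper, convex, lower semicontinuous and separable in the coordinates, the quadratic approximations $h^{\ell}$ must be bounded uniformly away from $0$ and $\infty$, the stepsize $\alpha^{\ell}$ must satisfy an Armijo-type sufficient-decrease condition, and the coordinate selection rule must satisfy a Gauss-Southwell-type coverage property.

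First I would check the structural properties. The penalty $P(\bm\phi) = \sum_{k=1}^p |\beta_k|$ is separable, convex, and lower semicontinuous, so the nonsmooth part is handled by the Tseng-Yun setup directly. The function $g$, on the domain where $\bm{V}(\bm\eta)$ is positive definite, is a smooth composition of $\log\det$ and a quadratic form in $\bm{y} - \bm{X}\bm\beta$ with the smooth matrix-valued map $\bm\eta \mapsto \bm{V}(\bm\eta)^{-1}$; hence $g \in C^\infty$ with locally Lipschitz gradient on this open domain. The cyclic rule trivially satisfies the Gauss-Southwell coverage condition because every coordinate is visited once per outer sweep.

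Next I would address the two genuinely technical points. First, the quadratic approximation $h^{\ell}$: in the details of Algorithm~1 (deferred to Appendix~C), $h^{\ell}$ is a positive approximation to the diagonal of the Hessian of $g$, and one must show that the construction keeps $h^{\ell}$ bounded below by some $c > 0$ and above by some $C < \infty$ on the trajectory. This is arranged by truncating from below by a small positive constant and relying on the smoothness of $g$ on a bounded set. Second, to guarantee that all iterates remain in a compact subset on which $g$ is smooth (in particular where $\bm{V}(\bm\eta)$ stays uniformly positive definite), one uses that the Armijo/inexact line search in step~(3) forces monotone decrease of $Q_\lambda$, and that the sublevel set $\{\bm\phi : Q_\lambda(\bm\phi) \le Q_\lambda(\bm\phi^0)\}$ is contained in the open domain of $g$ after a small compactification argument: the $\ell_1$-penalty coerces $\bm\beta$, while the $\log\det$ term in $g$ prevents $\bm{V}(\bm\eta)$ from degenerating (its smallest eigenvalue cannot tend to $0$, otherwise $g \to +\infty$), and the quadratic term prevents the largest eigenvalue from blowing up together with $\bm\eta$.

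The hard part, in my view, is precisely this last coercivity/feasibility step: guaranteeing that the iterates stay in a compact region where $g$ is $C^1$ with uniformly Lipschitz gradient, despite the non-convexity introduced by the variance parameters $\bm\eta$. Once this is established, the sufficient-decrease property produced by the Armijo stepsize, together with the uniform bounds on $h^{\ell}$ and the cyclic coverage of coordinates, places us squarely inside the hypotheses of Theorem~1(e) of \cite{Tseng07}, and their conclusion gives exactly that every cluster point of $\{\bm\phi^{\ell}\}$ is a stationary point of $Q_\lambda$. I would therefore organize the proof as: (i) identify $g, P$ and verify smoothness/convexity; (ii) verify uniform bounds on $h^{\ell}$ and the Armijo sufficient-decrease property stated in Appendix~C; (iii) establish that iterates stay in a compact feasible region; (iv) invoke \cite{Tseng07} to conclude.
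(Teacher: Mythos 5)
Your proposal follows essentially the same route as the paper: the paper's proof of Theorem \ref{bcgthm} likewise consists of verifying the hypotheses of \cite{Tseng07} (that $P(\cdot)=\|\cdot\|_1$ is proper, convex, continuous and separable, that $g$ is continuously differentiable, that $c_{min}\le h^{\ell}\le c_{max}$ via the explicit truncation in Appendix~C, and that $\inf_{\ell}\alpha_{init}^{\ell}>0$ and $\sup_{\ell}\alpha^{\ell}>0$ under the Armijo rule) and then invoking their convergence theorem. If anything, your treatment is more careful than the paper's, which does not explicitly address the point you flag as the hard step, namely that the iterates remain in a region where $\bm{V}(\bm\eta)$ stays positive definite so that $g$ is smooth along the trajectory.
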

The proof is given in the Appendix C.\\

In general, due to the non-convexity of the optimization problem, the CGD
algorithm may not achieve the global optimum.\\


\begin{center}
\section{Simulation study} \label{Sec6}
\end{center}
In this section, we assess the empirical performance of the
$\ell_1$-penalized maximum likelihood estimators (\ref{opt2}) and (\ref{opt3}) in different
kinds of simulation examples. We study several performance measures and compare the
proposed method with Lasso and linear mixed-effects methods, if
possible.\\
After some introductory remarks, we focus on high-dimensional
examples. The simulation study
for the low-dimensional setting is provided in the Supporting
Information. The application of the new procedure on a real data set is
illustrated in the next section.\\

Hereafter, we denote by \textit{lmmLasso} the $\ell_1$-penalized
maximum likelihood estimator (\ref{opt2}), by \textit{lmmadLasso} the
adaptive $\ell_1$-penalized
maximum likelihood estimator (\ref{opt3}) and by \textit{lme} the classical linear mixed-effects model
provided by the \texttt{R} package \texttt{nlme} \citep{PinJB2000}. Furthermore, let \textit{Lasso} denote the
standard Lasso \citep{Efro03} and \textit{adLasso} the adaptive Lasso
\citep{Zou06} where the regularization parameter is chosen by minimizing
the Bayesian Information Criterion.\\

As an overview, let us summarize the most important conclusions from
the simulation studies:
\begin{itemize}
\item [(a)] The variability of the estimated fixed-effects parameters
  $\hat{\beta}_k$ is much smaller if there is no corresponding random effect
  $(\hat{b}_i)_k$ for $i=1,\ldots,N$, for all \textit{lme}, \textit{lmmLasso} and \textit{lmmadLasso}.
\item [(b)] In the high-dimensional framework, the following aspects appear
  (and are virtually not observable in the low-dimensional setting):
\begin{enumerate}
\item Penalizing fixed-effects covariates which also incorporate a random
  effect causes bias problems. To be more specific, let us assume that the penalized $k$th
covariate has a fixed and a random-effects coefficient, i.e. $\beta_k$ and
$(b_i)_k$, respectively. If the regularization parameter $\lambda$ is large
and $\beta_k$ subject to penalization, then $\hat{\beta}_k$ is
shrunken towards zero. Thereby, the estimate of the
corresponding variance parameter gets large and $(\hat{b}_i)_k$ has a bias
related to the amount of shrinkage in $\hat{\beta}_k$. As a consequence,
covariates with fixed and random effect should no be subject to
  penalization.
\item An adaptive procedure (\ref{opt3}) with appropriate weights may
  reduce this adverse effect, but it does not overcome the aforementioned
  problem completely. The work of \cite{Bondell10} covers only the
  low-dimensional case and the authors do not present any
  parameter estimates in the simulation study.
\end{enumerate}
\item [(c)] There is a remarkable reduction of the estimated error variance
  $\hat{\sigma}^2$ when incorporating the random-effects structure in
  \textit{lmmLasso}, \textit{lmmadLasso}  and \textit{lme} compared with \textit{Lasso} and
  \textit{adLasso}. 
\item [(d)] The variability of the \textit{Lasso} and
  \textit{adLasso} coefficient estimators are larger than the corresponding variability of the mixed-effects model approaches. 
\item [(e)] If we focus on the identification of random-effects covariates,
  we suggest using a diagonal structure for $\bm\Psi$ and then eliminating
  those random-effects covariates with a small variance. An elaborate discussion of the
  selection of the random-effects structure is beyond the scope of this paper.
 In Section \ref{Sec7} we suggest a strategy how to remedy this problem.
\end{itemize}
In all subsequent simulation schemes, we restrict ourselves to the case where
all groups have the same number of observations, i.e. we set $n_i \equiv n$
for $i=1,\ldots,N$. Let the first column of $\bm{X}_i$ be the (non-penalized)
intercept. We assign  $\bm{Z}_i
\subset \bm{X}_i$ such that the columns of $\bm{Z}_i$ correspond to the
first $q$ columns of $\bm{X}_i$. This means that the first $q$ variables have
both a fixed-effects coefficient $\beta_k$ and a random-effects coefficient $(b_i)_k$ for
$i=1,\ldots,N$ and $k=1,\ldots,q$. The covariates are generated from a
multivariate normal distribution with mean zero and covariance matrix
$\bm{\Sigma}$ with the pairwise correlation
$\bm{\Sigma}_{kk'}=\rho^{|k-k'|}$  and $\rho=0.2$. Denote by $\bm\beta_0$
the true fixed effects and by $s_0:= \# \{1 \le k \le p : \beta_{0,k} \ne
0\}$ the true number of non-zero coefficients. Unless otherwise stated, we set $\bm\Psi=\theta^2\bm{I}$. In all subsequent tables, a non-penalized fixed-effects coefficient is marked by an asterisk $^*$.

\normalsize
\subsection{High-dimensional setting}
We study four examples in the high-dimensional setting ($\beta_{0,1}=1$ is
the unpenalized intercept).
\begin{itemize}
\item [$H_1$:] $N=25$, $n=6$, $N_T=150$, $p=300$, $q=2$,
  $\sigma^2=0.25$, $\theta^2=0.56$  and $s_0=5$ with $\bm\beta_0=(1,2,4,3,3,0,\ldots,0)^T$.
\item [$H_2$:] $N=30$, $n=6$, $N_T=180$, $p=500$, $q=1$,
  $\sigma^2=0.25$, $\theta^2=0.56$  and $s_0=5$ with $\bm\beta_0=(1,2,4,3,3,0,\ldots,0)^T$.
\item [$H_3$:] $N=30$, $n=6$, $N_T=180$, $p=1000$, $q=3$,
  $\sigma^2=0.25$, $\theta^2=0.56$  and $s_0=5$ with
  $\bm\beta_0=(1,2,4,3,3,0,\ldots,0)^T$.
\item [$H_4$:] $N=25$, $n=6$, $N_T=150$, $p=300$, $\sigma^2=0.25$,
\begin{equation*}
\bm\Psi=\begin{pmatrix} 3 &0 & 0 \\ 0 & 3& 0 \\ 0 & 0 &  2 \end{pmatrix}
\end{equation*}
and $s_0=5$ with $\bm\beta_0=(1,2,4,3,3,0,\ldots,0)^T$. In contrast to
the previous examples, we fit a wrong model assuming that $\bm\Psi$ is
diagonal with dimension 4.
\end{itemize}
The results in the form of means and standard deviations (in parentheses)
over 100 simulation runs are depicted in Table \ref{table4}, \ref{table5} and
\ref{table6}. Therein, $|S(\hat{\bm{\beta}})|$ denotes the cardinality of
the estimated active set and TP is the number of true positives.

\begin{table}[!h]
\scriptsize
\begin{center}
\caption{\textit{Simulation results for $H_1$, $H_2$ and $H_3$}} \label{table4}
\vspace{0.2cm}
\begin{tabular}{cc||cc|cc|ccccc}
\hline \hline
Model & Method& $|S(\hat{\bm\beta})|$ & TP & $\hat{\sigma}^2$ & $\hat{\theta}^2$
& $\hat{\beta}_1$ & $\hat{\beta}_{2}$ &  $\hat{\beta}_{3}$ & $\hat{\beta}_{4}$ & $\hat{\beta}_{5}$ \\ 
\hline
&     true& 5 & 5  & 0.25 &  0.56 & 1 & 2 & 4 & 3 & 3 \\
\hline
&lmmLasso&6.70  & 5 & 0.29 & 0.52& $1.01^*$ &$2.05^*$ & 3.86 & 2.90 & 2.88 \\
&&(2.14)&(0)&(0.05)&(0.12)&(0.16)&(0.16)& (0.06) & (0.06)& (0.06)\\
&lmmadLasso&6.59  & 5 & 0.22 & 0.52& $1.01^*$ &$2.03^*$ & 3.98 & 2.99 & 3.00 \\
$H_1$&&(2.02)&(0)&(0.04)&(0.12)&(0.16)&(0.16)& (0.06) & (0.05)& (0.05)\\
&Lasso & 6.29 & 5 &1.36& - &$1.01^*$& $2.07^*$& 3.76 & 2.84 & 2.79 \\
&       &(1.46)&(0)&(0.27)& - &(0.17)&(0.19)& (0.10) & (0.11)& (0.10)\\
&adLasso& 6.29 & 5 &1.16& - &$1.01^*$& $2.02^*$ & 3.98 & 3.00 & 2.99 \\
&       &(1.46)&(0)&(0.24)&-&(0.17)&(0.18)& (0.10) & (0.11)& (0.10)\\
\hline
&lmmLasso & 6.65 & 5 & 0.28 & 0.56 & $1.00^*$ & 1.90 & 3.91 & 2.92 & 2.89   \\
&   &(1.71)&(0)&(0.04)&(0.17)&(0.15)&(0.04)&(0.05)&(0.04)& (0.05)\\
&lmmadLasso & 6.53 & 5 & 0.22 & 0.55 & $1.00^*$ & 2.00 & 3.99 & 3.00 & 2.99   \\
$H_2$&   &(1.64)&(0)&(0.03)&(0.17)&(0.15)&(0.04)&(0.04)&(0.04)& (0.04)\\
&Lasso&6.84& 5 & 0.87 & - &$1.00^*$& 1.84 & 3.88 & 2.88 & 2.83 \\
&       &(2.02)&(0)&(0.19)& - &(0.15)&(0.08)& (0.07) & (0.09)& (0.08)\\
&adLasso&6.84& 5 & 0.72 & - &$1.00^*$& 2.00 & 4.00 & 3.00 & 2.98 \\
&       &(2.02)&(0)&(0.17)&-&(0.15)&(0.07)& (0.07) & (0.08)& (0.08)\\
\hline
&lmmLasso& 6.17 & 5 & 0.29 & 0.52 & $1.02^*$ &$2.00^*$&$4.04^*$& 2.84 & 2.84 \\
&   &(1.74)&(0)&(0.05)&(0.10)&(0.15)    &(0.15)& (0.15) & (0.07)& (0.06)\\
&lmmadLasso& 6.12 & 5 & 0.23 & 0.53 & $1.02^*$ &$2.00^*$&$4.00^*$& 2.99 & 2.99 \\
&   &(1.70)&(0)&(0.04)&(0.10)&(0.15)    &(0.15)& (0.15) & (0.07)& (0.06)\\
$H_3$&Lasso&5.93& 5 &1.94& - &$1.03^*$& $2.02^*$ & $4.06^*$ & 2.70 & 2.70 \\
&       &(1.48)&(0)&(0.36)&-&(0.17)  &(0.18)& (0.19) & (0.11)& (0.13)\\
&adLasso&5.93& 5 &1.69 & - &$1.03^*$& $2.02^*$ & $3.99^*$ & 2.98 & 2.97 \\
&       &(1.48)&(0)&(0.32)&-&(0.16)&(0.17)& (0.18) & (0.12)& (0.12)\\
\hline
\end{tabular}
\vspace{0.1cm}

 * indicates that the corresponding fixed-effects coefficient is not subject to penalization
\end{center}
\end{table}

\begin{table}[!h]
\scriptsize
\begin{center}
\caption{\textit{Simulation results for $H_4$}} \label{table5}
\vspace{0.2cm}
\begin{tabular}{c|cc|c|ccccc}
\hline
Method & $|S(\hat{\bm\beta})|$ & TP & $\hat{\sigma}^2$ 
& $\hat{\beta}_1$ & $\hat{\beta}_{2}$ &  $\hat{\beta}_{3}$ &
$\hat{\beta}_{4}$ & $\hat{\beta}_{5}$\\ 
\hline
 true& 5 & 5  & 0.25 & 1 & 2 & 4 & 3 & 3\\
\hline
lmmLasso&5.56 & 5 & 0.26 &$0.95^*$&$1.99^*$ &$3.97^*$ &$3.04^*$ & 2.82 \\
&(0.97)&(0)&(0.05)&(0.31)&(0.38)&(0.31)& (0.07) & (0.07)\\
lmmadLasso&5.56 & 5 & 0.22 &$0.95^*$&$1.99^*$ &$3.97^*$ &$3.00^*$ & 3.00 \\
&(0.97)&(0)&(0.04)&(0.31)&(0.38)&(0.30)& (0.07) & (0.06)\\
Lasso&6.84 & 5 & 7.85 &$0.94^*$&$2.01^*$ &$3.99^*$ &$3.11^*$ & 2.36\\
&(12.18)&(0)&(1.81)& (0.38)&(0.47)&(0.38)& (0.23) &(0.28)\\
adLasso&6.79 & 5 & 7.25 &$0.95^*$&$2.02^*$ &$4^*$ &$2.98^*$ & 3.01\\
&(11.68)&(0)&(1.76)&(0.37)&(0.47)&(0.38)& (0.22) & (0.29)\\
\hline
\end{tabular}
\vspace{0.1cm}

\scriptsize
* indicates that the corresponding fixed-effects coefficient is not subject to penalization
\end{center}
\end{table}

\begin{table}[!h]
\scriptsize
\begin{center}
\caption{\textit{Mean covariance estimates for $H_4$}} \label{table6}
\vspace{0.2cm}
\begin{tabular}{c|cccc}
\hline
Method & $\Psi_{11}$&$\Psi_{22}$&$\Psi_{33}$&$\Psi_{44}$\\ 
\hline
 true& 3 & 3 & 2 & 0\\
\hline
lmmLasso& 2.82 & 2.94 & 1.85 & 0.01 \\
        &(0.80)&(0.88)&(0.62)&(0.02)\\
lmmadLasso& 2.81 & 2.94 & 1.84 &0.01\\
          &(0.79)&(0.88)&(0.62)&(0.02)\\
\hline
\end{tabular}
\vspace{0.3cm}
\end{center}
\end{table}

Let us sum up the simulation results for the models $H_1$-$H_4$. As in
the low-dimensional setting (see Appendix D), the estimated active set is
sparse and all methods include the true non-zero coefficients.\\
Table \ref{table4} reveals that \textit{lmmLasso} and \textit{lmmadLasso} reduce
the error variance remarkably in comparison with \textit{Lasso} and
\textit{adLasso}. Nevertheless, \textit{lmmLasso} overestimates the true
value of $\sigma^2$ whereas \textit{lmmadLasso} underestimates
$\sigma^2$. We observe, in particular for $H_4$, that a maximum likelihood
approach (in contrast to a restricted maximum likelihood approach) gives biased variance-covariance estimators. It
is possible to implement a REML-type approach \citep{Ni10} in the high-dimensional
setting in order to reduce the bias in the variance
parameters. However, we have observed that $i)$ the number of
(Gauss-Seidel) cycles increases and $ii)$ the algorithm may fail to
converge.\\
In all models, we do not penalize the covariates with both a fixed
and random effect. Without doing this (not shown here), the fixed effects would
be set to zero whereas the estimated
between-subject variability $\hat{\theta}^2$ would increase. As a
consequence, the predicted random effects are too large and are not
centered at zero, but around the true fixed effect. Hence this would result
in a model which does not fulfill the assumptions in (\ref{model2})
anymore.\\
Table \ref{table4} and \ref{table5} reveal that the
variability of the fixed effects with no corresponding random effect is
approximately half of the non-penalized coefficients. This difference of
estimation variability is also observed
in the classical linear mixed-effects framework (see \textit{lme} in Table
7 and 8 in the Appendix D). 
Besides, \textit{lmmLasso} has a bias towards zero, which
is notably smaller than that from the \textit{Lasso}. As expected, this
bias can be reduced by \textit{lmmadLasso}.\\
Concerning $H_4$, it is worth to point out that although not knowing the true
covariance structure, we may use a diagonal structure for $\bm\Psi$ and then
drop the variances which are close to zero. A suggestion how to use this idea in a real data set is presented in the next section.

\subsection{Within-group prediction performance}
We now turn to consider the performance of the proposed methodology concerning
within-group prediction. We compare the predictive performance between six
different Lasso procedures. In doing so, denote by \textit{lmmLasso},
\textit{lmmadLasso}, \textit{Lasso} and \textit{adLasso} the procedures
from the previous subsection. In addition, let \textit{cv-Lasso} be a
cross-validated Lasso and \textit{cv-adLasso} a cross-validated adaptive
Lasso whose $\lambda$-value is chosen by $10$-fold cross-validation.\\
We fix the following scenario: $N=25$, $n_i \equiv6$ for $i=1,\ldots,N$,
$q=3$, $s_0=5$ with $\bm\beta_0=(1,1.5,1.2,1,2,0,\ldots,0)^T$, $\sigma^2=1$ and
$\rho=0.2$. We only alter the number of fixed
covariates $p$ and the variance component $\theta^2$. For measuring the
quality of prediction, we generate a test set with $50$ observations per group and calculate the mean squared prediction error.
The three models considered are 
\begin{center}
$P_1$: $p=10$, $P_2$: $p=100$
and $P_3$: $p=500$.
\end{center}
The results are shown in Table \ref{table7}.
\linespread{1} 
\begin{table}[h]
\scriptsize
\begin{center}
\caption{\textit{Mean squared prediction error for three simulation examples.}} \label{table7}
\vspace{0.3cm}
\begin{tabular}{cc|cccccc}
\hline \hline
\tiny{Model} & \tiny{$\theta^2$} & \tiny{lmmLasso} & \tiny{lmmadLasso} &\tiny{Lasso} & \tiny{adLasso} & \tiny{cv-Lasso} & \tiny{cv-adLasso} \\ 
\hline
$P_1$& 0 &1.01&1.02&1.00&1.01&1.05&1.01 \\
$(p=10)$ & 0.25 &1.33&1.29&1.76&1.84&1.81&1.84 \\
& 1 &1.66&1.55&3.74&3.74&3.88&3.77 \\
& 2 &1.67&1.80&5.92&6.25&5.94&6.25 \\
\hline
 $P_2$& 0 &1.12&1.02&1.26&1.09&1.20&1.14 \\
$(p=100)$& 0.25 &1.51&1.38&1.75&1.75&2.06&1.75 \\
& 1 &1.94&1.86&4.35&4.53&4.61&4.23 \\
& 2 &2.49&1.95&7.04&7.02&7.09&6.98 \\
\hline
$P_3$ & 0 &1.22&1.07&1.18&1.26&1.24&1.58 \\
$(p=500)$& 0.25 &1.83&1.58&2.63&2.67&2.98&3.58 \\
& 1 &2.00&1.85&4.35&3.78&4.14&4.85 \\
& 2 &2.54&2.04&10.30&8.26&9.47&11.28 \\
\hline
\end{tabular}\\
\vspace{0.3cm}
\end{center}
\end{table}

\linespread{1}
We see that the methods differ slightly for $\theta^2=0$  which corresponds
to no grouping structure. As $\theta^2$ increases, the mean squared
prediction error increases less for the \textit{lmmLasso} and the \textit{lmmadLasso} than for the other methods. These results highlight
that we can indeed achieve prediction improvements using the suggested
mixed-effects model approach if the underlying model is given by (\ref{model2}).

\section{Application: riboflavin data} \label{Sec7}
\textit{Data description.} We illustrate the proposed procedure on a real data set which is provided by
DSM (Switzerland). The response variable is the logarithm of the riboflavin
production rate of Bacillus subtilis. There are $p=4088$ covariates
measuring the gene expression levels. We have $N=28$ groups with
$n_i \in \{2,\ldots,6\}$ and $N_T=111$ observations. We standardize
all covariates to have mean zero and variance one.\\

\noindent\textit{Model selection strategy.} Preliminary, we address the issue of determining those covariates which have both
a fixed and a random-effects coefficient. In other words, we
specify the matrices $\bm{Z}_i \subset \bm{X}_i$. Since we have to deal with
high-dimensional, low sample size data, the various tools proposed
in \cite{PinJB2000} for determining $\bm{Z}_i$ can hardly be applied. Instead,
we suggest the following strategy:
\begin{enumerate}
\item [(1)] Calculate an ordinary Lasso solution $\hat{\bm\beta}^{Lasso}$
  (with cross-validation)
  and define the active set $\hat{S}_{init}:=\{1 \le k \le p : \hat{\beta}_k^{Lasso} \ne 0
  \}$.
\item [(2)] For each $l \in \hat{S}_{init}$, fit a model in which only the $l$th
  covariate has
  a random-effects coefficient. Denote the corresponding variance estimate by $\hat{\theta}^2_l$.
\item [(3)] Let $\hat{\theta}^2_{[1]}\ge\hat{\theta}^2_{[2]}\ge\ \ldots \ge
  \hat{\theta}^2_{[|S_{init}|]}$ be the ordered estimated variances from
  (2). Then for $\kappa>0$
  define the set $\mathcal{R}_{\kappa}:=\{l \in S_{init}: \hat{\theta}_l^2
  > \kappa \} \cap \{l \in \hat{S}_{init}: BIC_{\hat{\theta}_l^2} \le
  BIC_{0} \}$ where $BIC_{0}$ is the
  BIC of the Lasso solution in (1).
\item [(4)] Fit a model with
  $\bm{Z}_i=\bm{X}_i^{\mathcal{R}_{\kappa}}$ (where
  $\bm{X}_i^{\mathcal{R}_{\kappa}}$ consists of the variables included in
  $\mathcal{R}_{\kappa}$) and $\bm\Psi$ being diagonal and keep the
  non-zero elements of $\hat{\bm\Psi}$.
\end{enumerate}
By doing so (and setting $\kappa=0.05)$, it seems reasonable to fit a model
wherein two covariates have an additional random effect. Denoting these
variables as $k_1$ and $k_2$, the model can
be written as
\begin{equation} \label{ranintmod}
y_{ij} = \bm{x}_{ij}^T\bm{\beta} + b_{ik_1}z_{ijk_1} + b_{ik_2}z_{ijk_2} + \varepsilon_{ij} \quad i=1,\ldots,N,
\quad j=1,\ldots,n_i
\end{equation}
with $b_{ik_1} \sim \mathcal{N}(0,\theta_{k_1}^2)$, $b_{ik_2} \sim
\mathcal{N}(0,\theta_{k_2}^2)$  and $\varepsilon_{ij}\sim \mathcal{N}(0,\sigma^2)$.\\

\noindent\textit{Results.} We compare the results of \textit{lmmLasso} and
\textit{lmmadLasso} with \textit{Lasso} and \textit{adLasso}. The variance
component estimates, the cardinality of the active set and the rank $R$ of
five fixed-effects coefficients are shown in Table \ref{table8}. The
ranking is determined by ordering the absolute values of the fixed-effects coefficients.

\begin{table}[!h]
\footnotesize
\begin{center}
\caption{\textit{Results for \textit{lmmLasso}, \textit{lmmadLasso},
    \textit{Lasso} and \textit{adLasso} of the riboflavin data set}} \label{table8}
\vspace{0.3cm}
\begin{tabular}{c|cccc}
\hline
 &\tiny{lmmLasso} &\tiny{lmmadLasso} & \tiny{Lasso} & \tiny{adLasso} \\ 
\hline
$\hat{\sigma}^2$       & 0.18 & 0.15 &0.30 &0.20 \\
$\hat{\theta}_{k_1}^2$  & 0.17 & 0.08 &- &- \\
$\hat{\theta}_{k_2}^2$  & 0.03 & 0.03 &- &- \\
$|S(\hat{\bm\beta})|$  & 18    & 14 & 21 & 20 \\
$R_{\hat{\beta}_{1}}$   & 1 & 1 & 1 & 1 \\ 
$R_{\hat{\beta}_{2}}$   & 2 & 2 & 4 & 6 \\ 
$R_{\hat{\beta}_{3}}$   & 3 & 3 & 3 & 5 \\ 
$R_{\hat{\beta}_{4}}$   & 4 & 13 & - & - \\ 
$R_{\hat{\beta}_{5}}$   & 5 & 6 & 6 & 7 \\ 
\hline
\end{tabular}\\
\vspace{0.3cm}
\end{center}
\end{table}
From Table \ref{table8}, we see that the error variance of the
\textit{Lasso} may be considerably reduced by the \textit{lmmLasso}. Although the
variance $\hat{\theta}_{k_2}^2$ is small, the BIC of this model is smaller
than that of the model including only $k_1$ as random covariate. It is noteworthy
that $53\%$ of the total variability in the data set is due to the
between-group variability. This strongly indicates that there is indeed
some variability between the groups. As might have been expected from the
simulation results, the active set of \textit{lmmLasso} and
\textit{lmmadLasso} is smaller than the active set from \textit{Lasso} and
\textit{adLasso}. The ranking indicates that there is one
dominating covariate whereas the other coefficients differ only slightly
between the four procedures (not shown).  




\section{Discussion}
We present an $\ell_1$-penalized maximum likelihood estimator
for high-dimensional linear mixed-effects models. The proposed methodology
copes with the difficulty of combining a non-convex loss function and an
$\ell_1$-penalty. Thereby, we deal with
theoretical and computational aspects which are substantially more challenging than in the
linear regression setting. We prove theoretical results concerning the consistency of the estimator
and we present a non-asymptotic oracle result for the adaptive
$\ell_1$-penalized estimator. Moreover, by developing a coordinate
gradient descent algorithm, we achieve provable numerical convergence of our
algorithm to at least a stationary point. 
Our simulation studies and real data example show that the error variance
can be remarkably reduced when incorporating the knowledge
about the cluster structure among observations.\\ \vspace{0.3cm}

\noindent\textbf{\Large Supporting Information} \vspace{0.4cm}

\noindent Additional Supporting Information can be found in the Appendices
of this article:\\

\noindent \textbf{Appendix A.} Proof of Theorem 1 from Section \ref{Sec3}.\\
\textbf{Appendix B.} Proof of Theorem 2 from Section \ref{Sec3}.\\
\textbf{Appendix C.} Computational details of Algorithm 1 in Section
\ref{Sec5}.\\
\textbf{Appendix D.} Simulations for the low-dimensional setting in
Section \ref{Sec6}.\\

\bibliographystyle{/sfs/u/staff/schell/Diss/latex/paper/JASA/ECA_jasa}
\bibliography{/u/schell/Diss/latex/myLit}

\vspace{0.8cm}
\noindent J\"urg Schelldorfer, Seminar f\"ur Statistik, Department of Mathematics, ETH Zurich, CH-8092 Zurich, Switzerland.\\
E-mail: schelldorfer@stat.math.ethz.ch

\newpage

\begin{center}
\emph{Supporting Information to the paper} \\\vspace{0.5cm}

\large{Estimation for High-Dimensional Linear Mixed-Effects Models Using
  $\ell_1$-Penalization}\\\vspace{0.5cm}
 
\normalsize
J\"urg Schelldorfer, Peter B\"uhlmann and \\Sara van de Geer
\end{center}

\section*{Appendix A: Proof of Theorem 1}
The proof consists of three parts. Firstly, we need an inequality ensuring that Lemma \ref{Lemma2} holds. Secondly, we show
that the probability (\ref{231209}) in Lemma \ref{Lemma2} is large. And for
completion of our proof, we can then refer to \cite{Stad09}.\\
From model (5), the log-likelihood function of
$\bm{y}_i$ with respect to the parametrization in (12) is given by
\footnotesize
\begin{equation*}
\ell_{\bm{\phi}}(\bm{y}_i):=-\frac{n}{2}\log(2\pi) -\frac{1}{2}\log|\bm{Z}_i\bm\Psi_{\theta}\bm{Z}_i^T+e^{\varrho}\bm{I}|-\frac{1}{2}(\bm{y}_i-\bm{X}_i\bm\beta)^T(\bm{Z}_i\bm\Psi_{\theta}\bm{Z}_i^T+e^{\varrho}\bm{I})^{-1}(\bm{y}_i-\bm{X}_i\bm\beta)
\end{equation*}
\normalsize
Then, define the score function $s_{\bm{\phi}}(\bm{y}_i):= \partial/\partial \bm{\phi} \ell_{\bm{\phi}}(\bm{y}_i)$.
\begin{lemma} \label{Lemma1} Under Assumption 1, there exist constants $c_1,c_2,c_3 \in \mathbb{R}_+$ such that
\begin{equation*}
\sup_{\bm{\phi} \in \bm{\Phi}}\|s_{\bm{\phi}}(\bm{y}_i)\|_{\infty} \le G_1(\bm{y}_i):=c_1 +
c_2\|\bm{y}_i\|_2 + c_3\|\bm{y}_i\|_2^2 \quad i=1,\ldots,N.
\end{equation*}
\end{lemma}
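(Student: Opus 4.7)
The plan is to compute the partial derivatives of $\ell_{\bm\phi}(\bm y_i)$ with respect to each coordinate of $\bm\phi=(\bm\beta^T,\bm\theta^T,\varrho)^T$, and then bound each in absolute value by an expression of the prescribed form $c_1+c_2\|\bm y_i\|_2+c_3\|\bm y_i\|_2^2$, using the boundedness of $\bm V_i$ and $\bm V_i^{-1}$ that follows from Assumption 1 together with the constraints in (\ref{211209}).

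First I would write the derivatives explicitly. For $k=1,\dots,p$,
\[
\frac{\partial \ell_{\bm\phi}}{\partial \beta_k}(\bm y_i)=\bm x_{k,i}^T\bm V_i^{-1}(\bm y_i-\bm X_i\bm\beta),
\]
and for the variance coordinates $\eta_l$ ($l=1,\dots,q^*+1$),
\[
\frac{\partial \ell_{\bm\phi}}{\partial \eta_l}(\bm y_i)=-\tfrac{1}{2}\operatorname{tr}\!\Bigl(\bm V_i^{-1}\tfrac{\partial \bm V_i}{\partial \eta_l}\Bigr)+\tfrac{1}{2}(\bm y_i-\bm X_i\bm\beta)^T\bm V_i^{-1}\tfrac{\partial \bm V_i}{\partial \eta_l}\bm V_i^{-1}(\bm y_i-\bm X_i\bm\beta).
\]

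Next I would establish, uniformly in $\bm\phi\in\bm\Phi$, bounds of the form $e^{-K}\le \lambda_{\min}(\bm V_i)\le \lambda_{\max}(\bm V_i)\le C$ and $\|\partial \bm V_i/\partial\eta_l\|_{\mathrm{op}}\le C'$. The lower bound comes from $\bm V_i=\bm Z_i\bm\Psi_{\bm\theta}\bm Z_i^T+e^\varrho \bm I_n\succeq e^{-K}\bm I_n$, since $\|\bm\eta\|_\infty\le K$ forces $\varrho\ge -K$. The upper bound on $\lambda_{\max}(\bm V_i)$ follows from Assumption 1 (eigenvalues of $\bm Z_i^T\bm Z_i$ uniformly bounded by $K$), the fact that $\bm\Psi_{\bm\theta}$ is a smooth function of the bounded parameters $\bm\theta$ (again by $\|\bm\eta\|_\infty\le K$), and $e^\varrho\le e^K$; the derivatives $\partial\bm V_i/\partial\theta_l$ and $\partial\bm V_i/\partial\varrho=e^\varrho\bm I_n$ are bounded the same way. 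Consequently $\|\bm V_i^{-1}\|_{\mathrm{op}}\le e^K$ and $|\operatorname{tr}(\bm V_i^{-1}\partial\bm V_i/\partial\eta_l)|$ is uniformly bounded.

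Then I combine these ingredients with $\|\bm X_i\bm\beta\|_2\le\sqrt{n}\,\|\bm X_i\bm\beta\|_\infty\le\sqrt{n}\,K$, which is where the constraint $\sup_{\bm x\in\mathcal X}|\bm x^T\bm\beta|\le K$ enters. For the fixed-effects coordinates, Cauchy--Schwarz yields
\[
\Bigl|\tfrac{\partial \ell_{\bm\phi}}{\partial \beta_k}\Bigr|\le \|\bm x_{k,i}\|_2\,\|\bm V_i^{-1}\|_{\mathrm{op}}\bigl(\|\bm y_i\|_2+\sqrt{n}K\bigr),
\]
which is of the form $a_1+a_2\|\bm y_i\|_2$; here I use that each column norm $\|\bm x_{k,i}\|_2$ is uniformly bounded (a standing assumption on the design $\mathcal X$). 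For the variance coordinates, the trace term is a uniform constant, while the quadratic form is bounded by
\[
\tfrac{1}{2}\|\bm V_i^{-1}\tfrac{\partial\bm V_i}{\partial\eta_l}\bm V_i^{-1}\|_{\mathrm{op}}\bigl(\|\bm y_i\|_2+\sqrt{n}K\bigr)^2\le b_1+b_2\|\bm y_i\|_2+b_3\|\bm y_i\|_2^2.
\]
Taking the maximum of the bounds across the finitely many coordinates gives constants $c_1,c_2,c_3$, independent of $\bm\phi$, with $\sup_{\bm\phi\in\bm\Phi}\|s_{\bm\phi}(\bm y_i)\|_\infty\le c_1+c_2\|\bm y_i\|_2+c_3\|\bm y_i\|_2^2$.

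The only step requiring real care is the uniform lower bound $\lambda_{\min}(\bm V_i)\ge e^{-K}$ and the matching upper bound for $\bm V_i$, since these ensure both $\bm V_i^{-1}$ and its composition with $\partial \bm V_i/\partial\eta_l$ can be controlled uniformly in operator norm; everything else is a routine combination of Cauchy--Schwarz and the parameter-space constraints in (\ref{211209}).
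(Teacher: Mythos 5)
Your proposal is correct and follows essentially the same route as the paper's (very terse) proof: write out the score coordinates, apply Cauchy--Schwarz, and control $\bm V_i^{-1}$ and $\partial\bm V_i/\partial\eta_l$ in operator norm using the eigenvalue bound on $\bm Z_i^T\bm Z_i$ from Assumption 1 together with the constraints $\|\bm\eta\|_\infty\le K$ and $\sup_{\bm x\in\mathcal X}|\bm x^T\bm\beta|\le K$ from the definition of $\bm\Phi$. The only point worth flagging is that your bound on the $\beta_k$-derivatives additionally uses boundedness of the design columns $\|\bm x_{k,i}\|_2$, which the paper leaves implicit in the assumption that the covariates lie in a space $\mathcal X$; you correctly identify this as a standing design assumption rather than deriving it from Assumption 1.
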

\begin{proof}
The proof is straightforward using on the one hand the Cauchy-Schwarz
inequality and the fact that the induced $L_2$-norm of a square 
matrix $\bm{A}$ is given by $\|\bm{A}\|_2 =\sqrt{\lambda_{max}(\bm{A}^T\bm{A})}$, where $\lambda_{max}$ denotes
the largest eigenvalue. On the other hand, we conclude from Assumption
1 and (12) that the eigenvalues of
$\bm{Z}\bm\Psi_{\bm\theta}\bm{Z}^T$ and
$\bm{Z}\partial/\partial\theta_j\bm\Psi_{\bm\theta}\bm{Z}^T$ are bounded.
\end{proof}

Now we introduce the empirical process and present a result which
controls the increments of it. The Lemma below gives a lower
bound for the probability that the increments are small. Afterwards, we show that
this lower bound is large.\\
Define the empirical process
\begin{equation*}
V_N(\bm{\phi}):= \frac{1}{N} \sum_{i=1}^N \Big\{\ell_{\bm{\phi}}(\bm{y}_i) -
  \mathbb{E}[\ell_{\bm{\phi}}(\bm{y}_i)]  \Big\}
\end{equation*}
and
\begin{equation} 
\lambda_0 = M_N \log N \sqrt{\frac{\log(p \vee N)}{N}} \tag{A.1}\label{12.08}  .
\end{equation}

\begin{lemma} \label{Lemma2} Under Assumption 1 and for
  constants $a_1$, $a_2$ and $a_3$ depending on $K$ and for all $T \ge a_1$,
\begin{equation*} 
\sup_{\bm{\phi} \in \bm{\Phi}} \frac{\Big|V_N(\bm{\phi})-V_N(\bm{\phi}_0)
  \Big|}{(\|\bm\beta-\bm\beta_0\|_1 + \|\bm{\eta}-\bm{\eta}_0\|_2)\vee \lambda_0} \le T \lambda_0
\end{equation*}
with probability at least
\begin{equation}
1-a_2\exp\Big[-\frac{T^2 \log^2N \log(p\vee N)}{a_3^2} \Big]
-\mathbb{P}\Bigg(\frac{1}{N} \sum_{i=1}^N F(\bm{y}_i) >
\frac{T\lambda_0^2}{dK} \Bigg)  \tag{A.2} \label{231209}
\end{equation}
where $d:=n+q^*+1$ and
\begin{equation} 
F(\bm{y}_i)=G_1(\bm{y}_i)\bm{1}_{\{G_1(\bm{y}_i) > M_N \}} + \mathbb{E}
\Big[G_1(\bm{y}_i)\bm{1}_{\{G_1(\bm{y}_i) > M_N\}} \Big]
\tag{A.3}\label{211109}  .
\end{equation}
\end{lemma}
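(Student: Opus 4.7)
The plan is to express the increment via a path integral of the centered score process and then control the latter by a truncation-plus-concentration argument at level $M_N$. By the fundamental theorem of calculus along the segment $\bm\phi_t:=\bm\phi_0 + t(\bm\phi-\bm\phi_0)$,
\[
V_N(\bm\phi) - V_N(\bm\phi_0) = \int_0^1 G_N(\bm\phi_t)^T (\bm\phi-\bm\phi_0)\, dt, \quad
G_N(\bm\phi) := \frac{1}{N}\sum_{i=1}^N\bigl\{s_{\bm\phi}(\bm y_i) - \mathbb{E}\,s_{\bm\phi}(\bm y_i)\bigr\}.
\]
Splitting $G_N$ into its fixed-effects block $G_N^{\bm\beta}\in\mathbb R^p$ and its variance block $G_N^{\bm\eta}\in\mathbb R^{q^*+1}$, and applying H\"older with the $(\ell_\infty,\ell_1)$ pairing on the first block and the $(\ell_2,\ell_2)$ pairing on the second, I get
\[
|V_N(\bm\phi) - V_N(\bm\phi_0)| \le \sup_{\bm\phi\in\bm\Phi}\bigl(\|G_N^{\bm\beta}(\bm\phi)\|_\infty + \|G_N^{\bm\eta}(\bm\phi)\|_2\bigr)\cdot\bigl(\|\bm\beta-\bm\beta_0\|_1 + \|\bm\eta-\bm\eta_0\|_2\bigr).
\]
Dividing by the normalization $(\|\bm\beta-\bm\beta_0\|_1 + \|\bm\eta-\bm\eta_0\|_2)\vee\lambda_0$ produces a ratio bounded by the uniform sup above, so the lemma reduces to showing $\sup_{\bm\phi\in\bm\Phi}(\|G_N^{\bm\beta}\|_\infty + \|G_N^{\bm\eta}\|_2)\le T\lambda_0$ with the stated probability.

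For the uniform sup I would truncate each score coordinatewise at level $M_N$ by writing $s_{\bm\phi} = s_{\bm\phi}\mathbb{1}_{\{G_1\le M_N\}} + s_{\bm\phi}\mathbb{1}_{\{G_1> M_N\}}$. By Lemma \ref{Lemma1}, the truncated piece is bounded by $M_N$ in sup-norm. A Bernstein (or Hoeffding) inequality applied coordinatewise then yields, for a fixed $\bm\phi$, a deviation of order $M_N\sqrt{\log(p\vee N)/N}$ with Gaussian-type tail; combining this via a union bound over the $p+q^*+1$ score coordinates and over a $\delta$-net in $\bm\Phi$ (of log-cardinality $O(\log(p\vee N))$ thanks to the compactness in (\ref{211209}) and the smoothness of $s_{\bm\phi}$ in $\bm\phi$) produces a uniform deviation bound of order $M_N\log N\sqrt{\log(p\vee N)/N}=\lambda_0$ with failure probability $a_2\exp[-T^2\log^2N\log(p\vee N)/a_3^2]$ after a standard peeling step over the radii of $\bm\Phi$.

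The untruncated remainder is absorbed into the Markov-type term: both $\frac{1}{N}\sum_i s_{\bm\phi}(\bm y_i)\mathbb{1}_{\{G_1>M_N\}}$ and its expectation are dominated coordinatewise by the envelope appearing in $F$, so the untruncated contribution to $\sup_{\bm\phi}(\|G_N^{\bm\beta}\|_\infty + \|G_N^{\bm\eta}\|_2)$ is at most a constant times $\frac{1}{N}\sum_i F(\bm y_i)$. Bookkeeping the parameter-space diameter $K$ together with the dimension factor $d=n+q^*+1$ (needed to absorb the $\sqrt{q^*+1}$ from the $\ell_2$-conversion on the variance block and to pay for the diameter of $\bm\Phi$ when matching against the $\lambda_0^2$ normalization) converts the required event into $\{\frac{1}{N}\sum_i F(\bm y_i)\le T\lambda_0^2/(dK)\}$, which is exactly controlled by Markov's inequality. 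The main obstacle I anticipate is the entropy control in the sup over the $p$-dimensional parameter set: a naive $\varepsilon$-net in $\bm\beta$ has log-cardinality of order $p\log(1/\varepsilon)$, which would destroy the rate. The way out is to exploit that $s_{\bm\phi}$ depends on $\bm\beta$ only through the linear functionals $\bm x^T\bm\beta$, which are uniformly bounded by $K$ under (\ref{211209}); this reduces the effective complexity in the $\bm\beta$-direction to a coordinatewise union bound contributing only $\log p$, keeping the overall entropy at $\log(p\vee N)$ and hence producing the advertised rate.
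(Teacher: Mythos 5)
The paper does not prove this lemma itself; it defers to the proof in \cite{Stad09}, whose architecture your sketch reproduces correctly in outline: truncate the score/likelihood at the level $M_N$, control the bounded part by a uniform concentration bound producing the exponential term (note that $a_2\exp[-T^2\log^2N\log(p\vee N)/a_3^2]$ is exactly a Hoeffding-type tail at deviation $T\lambda_0$ for averages of variables bounded by $M_N$), and absorb the unbounded remainder into the envelope $F$ via the Markov-type event, with the threshold $T\lambda_0^2/(dK)$ arising from the $\vee\,\lambda_0$ floor in the denominator and the diameter $dK$ of the reparametrized space --- both of which you have identified. The initial reduction via the fundamental theorem of calculus and the $(\ell_\infty,\ell_1)$ / $(\ell_2,\ell_2)$ pairing is also sound.

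The genuine gap is in the uniform concentration step for the truncated part. You assert that a $\delta$-net of $\bm\Phi$ has log-cardinality $O(\log(p\vee N))$, and that the dependence of $s_{\bm\phi}$ on $\bm\beta$ only through $\bm{x}^T\bm\beta$ reduces the entropy in the $\bm\beta$-direction to a coordinatewise union bound. Neither claim is justified, and the first is false as stated: even after passing to the effective parametrization, the fitted values $(\bm{X}_1\bm\beta,\ldots,\bm{X}_N\bm\beta)$ range over a box in $\mathbb{R}^{Nn}$, so any $\delta$-covering of the index set has log-cardinality of order $Nn\log(1/\delta)$, which would contribute a factor of order $\sqrt{N}$ to the deviation bound and destroy the rate $\lambda_0$. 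The reduction from the supremum over $\bm\Phi$ to a maximum over $p$ coordinates is precisely the crux of the proof and is not obtained by covering; in the cited reference it comes from symmetrization followed by the Ledoux--Talagrand contraction inequality, exploiting that the truncated log-likelihood is an $O(M_N)$-Lipschitz function of the linear index $(\bm{X}_i\bm\beta,\bm\eta)$. Contraction converts the expected supremum of the symmetrized process into $M_N$ times the supremum of a Rademacher process that is \emph{linear} in $(\bm\beta,\bm\eta)$, whose dual-norm evaluation is $\max_k\big|\tfrac{1}{N}\sum_{i,j}\varepsilon_{ij}x_{ij,k}\big|$ plus a low-dimensional $\bm\eta$-part; only then do Hoeffding and a union bound over $p+q^*+1$ coordinates deliver the $\sqrt{\log(p\vee N)/N}$ rate, with concentration of the supremum around its mean supplied by a bounded-differences or Bousquet inequality, and with a peeling over the value of $\|\bm\beta-\bm\beta_0\|_1+\|\bm\eta-\bm\eta_0\|_2$ (which your version, having already pulled the parameter difference out by H\"older, no longer has available) accounting for the remaining $\log N$ factor in $\lambda_0$. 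Without this symmetrization--contraction step, or an equivalent chaining argument adapted to the Lipschitz-in-linear-index structure, your union bound does not go through.
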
 
The proof of Lemma \ref{Lemma2} is given in \cite{Stad09}. Next, we show that the third term is small in our setting.
\begin{lemma} \label{Lemma3}
There are constants $b_1$ and $b_2$ depending on $K$ and $n$, a constant $\rho$
depending on $T$, $n$ and $K$ such that for any $0<\varepsilon<1/2$
and $M_N:=b_1(2 \sqrt{\log N}+\sqrt{b_2})^2$ we have
\begin{equation*}
\mathbb{P} \Bigg(\frac{1}{N}
\sum_{i=1}^N F(\bm{y}_i)>\frac{T\lambda_0^2}{dK} \Bigg)
\le \frac{\rho}{\log N}\frac{1}{N^{1-2\varepsilon}}  .
\end{equation*}
\end{lemma}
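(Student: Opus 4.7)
The plan is to bound the probability by Markov's inequality applied to the nonnegative statistic $\frac{1}{N}\sum_{i=1}^N F(\bm{y}_i)$. Since $F(\bm{y}_i) = G_1(\bm{y}_i)\bm{1}_{\{G_1(\bm{y}_i) > M_N\}} + \mathbb{E}[G_1(\bm{y}_i)\bm{1}_{\{G_1(\bm{y}_i) > M_N\}}]$, we have $\mathbb{E}[F(\bm{y}_i)] = 2\,\mathbb{E}[G_1(\bm{y}_i)\bm{1}_{\{G_1(\bm{y}_i) > M_N\}}]$, so the task reduces to controlling this truncated expectation uniformly in $i$. The choice $M_N = b_1(2\sqrt{\log N}+\sqrt{b_2})^2 \asymp \log N$ will be calibrated so that the indicator turns on only on a set of very small Gaussian-tail probability. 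Recalling from (\ref{12.08}) that $\lambda_0^2 \asymp \log^4 N \cdot \log(p\vee N)/N$, the Markov threshold $T\lambda_0^2/(dK)$ has the same order up to constants, so it will suffice to show $\mathbb{E}[F(\bm{y}_i)] = O(\log N\cdot N^{-(3-2\varepsilon)})$.

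The main ingredient is a Gaussian tail bound for $\|\bm{y}_i\|_2$. The parameter space constraints in (\ref{211209}) give $\|\bm{X}_i\bm\beta_0\|_2 \le K\sqrt{n}$, and Assumption \ref{Ass1} combined with $|\varrho_0|, \|\bm\theta_0\|_\infty \le K$ implies that the largest eigenvalue $v_{\max}$ of $\bm{V}_{0,i} = \bm{Z}_i\bm\Psi_{\bm\theta_0}\bm{Z}_i^T + e^{\varrho_0}\bm{I}_n$ is bounded by a constant depending only on $K$ and $n$. The Borell--Sudakov--Tsirelson inequality applied to the $\sqrt{v_{\max}}$-Lipschitz map $\bm z \mapsto \|\bm{V}_{0,i}^{1/2}\bm z\|_2$, together with $\mathbb{E}\|\bm{y}_i-\bm{X}_i\bm\beta_0\|_2 \le \sqrt{n v_{\max}}$ and the triangle inequality, yields
\begin{equation*}
\mathbb{P}\bigl(\|\bm{y}_i\|_2 > K\sqrt{n}+\sqrt{n v_{\max}}+t\bigr) \le \exp\bigl(-t^2/(2 v_{\max})\bigr),\qquad t\ge 0.
\end{equation*}
Because $G_1(\bm{y}_i) = c_1 + c_2\|\bm{y}_i\|_2 + c_3\|\bm{y}_i\|_2^2$ is strictly increasing in $\|\bm{y}_i\|_2$ for large values, for any $0 < \gamma < 1/\sqrt{c_3}$ and all $M_N$ sufficiently large we have $\{G_1(\bm{y}_i) > M_N\} \subset \{\|\bm{y}_i\|_2 > \gamma\sqrt{M_N}\}$. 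Choosing $b_2$ so that $\gamma\sqrt{b_1 b_2} \ge K\sqrt{n}+\sqrt{n v_{\max}}$ and substituting $\sqrt{M_N} = \sqrt{b_1}(2\sqrt{\log N}+\sqrt{b_2})$ into the Gaussian bound yields
\begin{equation*}
\mathbb{P}(G_1(\bm{y}_i) > M_N) \le \exp\bigl(-2\gamma^2 b_1 \log N / v_{\max}\bigr) = N^{-2\gamma^2 b_1/v_{\max}},
\end{equation*}
and the exponent can be pushed to be at least $3-2\varepsilon$ by enlarging $b_1$.

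A standard tail integration $\mathbb{E}[G_1\bm{1}_{\{G_1>M_N\}}] = M_N\,\mathbb{P}(G_1>M_N) + \int_{M_N}^{\infty} \mathbb{P}(G_1 > u)\,du$, combined with the exponential-in-$u$ decay inherited from the Gaussian tail (since $G_1 \asymp c_3 \|\bm{y}_i\|_2^2$ for large values), then yields $\mathbb{E}[F(\bm{y}_i)] = O(\log N\cdot N^{-(3-2\varepsilon)})$ uniformly in $i$. Markov's inequality gives
\begin{equation*}
\mathbb{P}\Bigl(\tfrac{1}{N}\sum_{i=1}^N F(\bm{y}_i) > \tfrac{T\lambda_0^2}{dK}\Bigr) \le \frac{dK\,\max_i \mathbb{E}[F(\bm{y}_i)]}{T\lambda_0^2} = O\!\left(\frac{1}{\log^3 N \cdot N^{2-2\varepsilon}}\right) \le \frac{\rho}{\log N\cdot N^{1-2\varepsilon}}
\end{equation*}
for $N$ large enough, with $\rho = \rho(T,n,K)$ absorbing the constants. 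The main obstacle is the careful tuning of $b_1$ against $v_{\max}$ and $\gamma$ so that the Gaussian tail exponent $2\gamma^2 b_1/v_{\max}$ exceeds $1-2\varepsilon$ with enough slack to overcome the $\lambda_0^{-2}$ factor in the Markov denominator; once the sub-Gaussian tail of $\|\bm{y}_i\|_2$ is in hand, all remaining steps are routine bookkeeping.
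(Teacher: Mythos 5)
Your proof is correct, but it takes a genuinely different route from the paper's. The paper obtains the tail bound $\mathbb{P}(\|\bm{y}_i\|_2^2 > M_N) \le n/N^2$ by decomposing $\|\bm{y}_i\|_2^2$ into a sum of scaled noncentral $\chi^2_1$ variables and applying elementary Mills-ratio bounds, with $b_1$ and $b_2$ fixed as explicit eigenvalue and noncentrality bounds; since the exponent is then locked at $2$, the truncated expectation $\mathbb{E}[G_1\bm{1}_{\{G_1>M_N\}}]$ is handled via H\"older's inequality with exponent $1/\varepsilon$ (using finiteness of all moments of the noncentral $\chi^2$), which is precisely where $\varepsilon$ enters and why the rate $N^{-(1-2\varepsilon)}$ appears; finally the paper weakens the Markov threshold to $a_4 \log N/N$. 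You instead use Borell--TIS concentration for the Lipschitz map $\bm z \mapsto \|\bm{X}_i\bm\beta_0 + \bm{V}_{0,i}^{1/2}\bm z\|_2$ to make the tail exponent tunable by enlarging $b_1$, replace H\"older by a direct tail integration exploiting the exponential decay, and keep the full threshold $T\lambda_0^2/(dK) \asymp \log^4 N \log(p\vee N)/N$, ending with the stronger bound $O(N^{-2+2\varepsilon}/\log^3 N)$ in which $\varepsilon$ is pure slack. Both arguments are sound; yours is shorter and shows the $\varepsilon$-loss is an artifact of the paper's method, while the paper's is more elementary and works with the structurally determined constants. The one caveat is that your argument requires the freedom to inflate $b_1$ (so that $2\gamma^2 b_1/v_{\max} \ge 3$, which is uniform in $\varepsilon\in(0,1/2)$ as needed); this is admissible under the lemma's ``there exist constants'' phrasing and only changes $M_N$, and hence $\lambda_0$, by a constant factor, which is harmless for Theorems 1 and 2 since they only use $M_N \asymp \log N$ — but it does mean you are proving the lemma for a different $M_N$ than the one the paper pins down.
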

\begin{proof}
In the subsequent discussion, if $A$ is a constant, we
assume throughout that $N$ is large enough such that $M_N-A>0$. From (\ref{12.08}) we see that it suffices to show that for a constant $a_4$,
\begin{equation} 
\mathbb{P} \Bigg(\frac{1}{N}
\sum_{i=1}^N F(\bm{y}_i)>a_4 \frac{\log N}{N} \Bigg)
\le \frac{\rho}{\log N}\frac{1}{N^{1-2\varepsilon}}  . \tag{A.4}\label{251109}
\end{equation}
The expectation in (\ref{211109}) only affects the constants in the
remainder of the proof. Therefore, we omit this term in the sequel. From
\footnotesize
\begin{equation*}
\mathbb{P}[c_1+c_2\|\bm{y}_i\|_2 + c_3\|\bm{y}_i\|_2^2>M_N] \le
\mathbb{P}\Big[\|\bm{y}_i\|_2^2>\Big(\frac{M_N-c_1}{2c_2} \Big)^2\Big]
+ \mathbb{P}\Big[\|\bm{y}_i\|_2^2 > \frac{M_N-c_1}{2c_3}\Big]  ,
\end{equation*} 
\normalsize
and the fact that $M_N \to \infty $, we deduce that we can restrict ourselves to
the analysis of $\mathbb{P}[\|\bm{y}_i\|_2^2 > M_N]$. For the sake of notational simplicity, we will leave out the index $i$ and
show that for an appropriate definition of $M_N$, 
\begin{equation} 
\mathbb{P}[\|\bm{y}\|_2^2 > M_N] \le \frac{n}{N^2} \tag{A.5} \label{041009}
 .
\end{equation}

Denote by $\chi^2_{\nu}(\delta)$ the noncentral $\chi^2$ distribution with $\nu$
degrees of freedom and non-centrality parameter $\delta$. 
The following identity holds \citep{Liu08}.
\begin{claim} \label{claim1} If $\bm{y} \thicksim \mathcal{N}_n(\bm{\mu},\bm{V})$ with $\bm{\mu} \in
  \mathbb{R}^{n}$ and $\bm{V} \in \mathbb{R}^{n \times n}$ positive
  definite, then $
\|\bm{y}\|_2^2 = \bm{y}^T\bm{y} = \sum_{j=1}^n \lambda_j \chi^2_1(\delta_j)$
where $\{\chi^2_1(\delta_j)\}_{j=1}^n$ are independent, $\lambda_j$ for
$j=1,\ldots,n$ are the eigenvalues of $\bm{V}$ and if $\bm{V}=\bm{UDU}^T$ for an
orthonormal matrix $\bm{U}$, then $\delta_j=(\bm{U}^T\bm{{V}}^{-1/2}\bm{\mu})^2_j$.
\end{claim} 

\begin{claim} \label{claim2}
\begin{equation*}
\mathbb{P}[\chi^2_1(\delta)>M] \le
\frac{1}{\sqrt{M}-\sqrt{\delta}} \frac{2}{\sqrt{2 \pi}}
\exp\Bigg(-\frac{(\sqrt{M}-\sqrt{\delta})^2}{2}  \Bigg)   .
\end{equation*}
\end{claim}
\begin{proof} 
If $X \thicksim \mathcal{N}(\mu,\zeta^2)$, then by definition of the noncentral
 $\chi^2$ distribution $(X/\zeta)^2 \thicksim  \chi^2_{\nu=1}(\delta=(\mu/\zeta)^2)$. 
Hence $
 \mathbb{P}[\chi^2_1(\delta)>M] =  2 \cdot
 \mathbb{P}[\frac{X}{\zeta}>\sqrt{M}] 
   = 2 \cdot \mathbb{P}[\frac{X-\mu}{\zeta}>\sqrt{M}-\sqrt{\delta}]  = 2
   \cdot S(\sqrt{M}-\sqrt{\delta})$,  where $S(t):=\frac{1}{\sqrt{2\pi}} \int_t^{\infty}\exp(-u^2/2) du$ is the
 survival function of a standard Gaussian random variable for which the
 following inequalities hold:
 \begin{equation*}
 \frac{t}{1+t^2}\frac{1}{\sqrt{2 \pi}}\exp(-t^2/2) < S(t) < \frac{1}{t}
 \frac{1}{\sqrt{2\pi}}\exp(-t^2/2) \quad \textrm{for} \quad t > 0  .
 \end{equation*}
Thus, we conclude
\begin{equation*}
\mathbb{P}[\chi^2_1(\delta)>M] \le
\frac{1}{\sqrt{M}-\sqrt{\delta}} \frac{2}{\sqrt{2 \pi}}
\exp\Bigg(-\frac{(\sqrt{M}-\sqrt{\delta})^2}{2}  \Bigg)  .
\end{equation*}
\end{proof}
 
\begin{claim} \label{claim3} For $M_{N,\delta} := (2 \sqrt{\log N}+\sqrt{\delta})^2$, 
\begin{equation*}
\mathbb{P}[\chi^2_1(\delta)>M_{N,\delta}] \le \frac{1}{N^2}.
\end{equation*}
\end{claim}
\begin{proof}
Using Claim \ref{claim2},
\begin{align*}
\mathbb{P}[\chi^2_1(\delta)>M_{N,\delta}] & \le 
\frac{1}{\sqrt{M_{N,\delta}}-\sqrt{\delta}}\frac{2}{\sqrt{2\pi}}
\exp(-\frac{(\sqrt{M_{N,\delta}}-\sqrt{\delta})^2}{2}) \\
 & \le 1 \cdot \exp(-\frac{(2 \sqrt{\log N}+\sqrt{\delta}-\sqrt{\delta})^2}{2})
 \le \frac{1}{N^2} .
\end{align*}
\end{proof}

\begin{claim} \label{claim4} For the eigenvalue vector
  $\bm\lambda:=(\lambda_1,\ldots,\lambda_n)$ of $\bm{V}$, the
  non-centrality parameter vector $\bm\delta:=(\delta_1,\ldots,\delta_n)$,
  define
\begin{align*}
& \lambda_{max} := \max_{1 \le j \le n} \lambda_j \\
&M_{N,n,\bm\lambda,\delta_j}:=n\lambda_{max}(2 \sqrt{\log
  N}+\sqrt{\delta_j})^2 \\
&\delta:=\argmax_{\delta_j,1 \le j \le
  n}\mathbb{P}[\chi^2_1(\delta_j)>\frac{M_{N,n,\bm\lambda,\delta_j}}{n\lambda_{max}}]\\
&M_{N,n,\bm\lambda,\bm\delta}:=M_{N,n,\bm\lambda,\delta}  ,
\end{align*}
then
\begin{equation*}
 \mathbb{P}[\|\bm{y}\|_2^2>M_{N,n,\bm\lambda,\bm\delta}] \le \frac{n}{N^2}.
\end{equation*}
\end{claim}
\begin{proof} For any $M > 0$, using Claim \ref{claim1} and \ref{claim2}, 
\footnotesize
\begin{align*}
\mathbb{P}[\|\bm{y}\|_2^2>M]& =  \mathbb{P}\Big[\sum_{j=1}^n\lambda_j\chi^2_1(\delta_j)>M\Big]
\le \sum_{j=1}^n \mathbb{P}\Big[\chi^2_1(\delta_j)>\frac{M}{n\lambda_j}\Big]
 \le \sum_{j=1}^n
\mathbb{P}\Big[\chi^2_1(\delta_j)>\frac{M}{n\lambda_{max}}\Big]\\
&  \le n \cdot \max_{1 \le j \le n}
\mathbb{P}\Big[\chi^2_1(\delta_j)>\frac{M}{n\lambda_{max}}\Big].
\end{align*}
\normalsize
Set $M = M_{N,n,\bm\lambda,\bm\delta}$ and using Claim \ref{claim3}
\begin{align*}
\mathbb{P}[\|\bm{y}\|_2^2>M_{N,n,\bm\lambda,\bm\delta}]
\le n \cdot \mathbb{P}[\chi^2_1(\delta) > (2 \sqrt{\log
  N}+\sqrt{\delta})^2]
 \le \frac{n}{N^2}  .
\end{align*}
\end{proof}
At this point, we have proven (\ref{041009}). Due to Assumption 1 (and by using the same
techniques as in the proof of Lemma \ref{Lemma1}) , $\lambda_{max}^{(i)} \le \frac{1}{2}q^2(q+1)K^3:=b_1$ and $\delta_j^{(i)} \le
nK^2e^K:=b_2$ for all $i$ and $j$. Thereby, we define $$M_N:=b_1(2
\sqrt{\log N}+\sqrt{b_2})^2  .$$ Hence we choose $M_N$ of the order
$\log N$.
We now use these results to derive formula (\ref{251109}),
\small
\begin{flalign*}
&\mathbb{P}\Bigg[\frac{1}{N} \sum_{i=1}^N G_1(\bm{y}_i) \bm{1}_{\{G_1(\bm{y}_i) > M_N \}}
> a_4 \frac{\log N}{N} \Bigg]\\
& = \mathbb{P} \Bigg[\frac{1}{N} \sum_{i=1}^N
\Big[c_1+c_2\|\bm{y}_i\|_2+c_3\|\bm{y}_i\|_2^2\Big]\bm{1}_{\{c_1+c_2\|\bm{y}_i\|_2+c_3\|\bm{y}_i\|_2^2>M_N \}} >
a_4\frac{\log N}{N}\Bigg]  , \\
&\textrm{and using Markov's inequality gives}\\
&\le \frac{1}{a_4}\frac{1}{\log N} \Bigg\{c_1 \sum_{i=1}^N \mathbb{P}
\Big[c_1+c_2\|\bm{y}_i\|_2+c_3\|\bm{y}_i\|_2^2>M_N\Big]\\
& \quad + c_2\sum_{i=1}^N
\mathbb{E}\Big[\|\bm{y}_i\|_2\bm{1}_{\{c_1+c_2\|\bm{y}_i\|_2+c_3\|\bm{y}_i\|_2^2>M_N \}} \Big] \\
& \quad + c_3
\sum_{i=1}^N \mathbb{E}\Big[\|\bm{y}_i\|_2^2\bm{1}_{\{c_1+c_2\|\bm{y}_i\|_2+c_3\|\bm{y}_i\|_2^2>M_N
  \}} \Big] \Bigg\}  .\\
& \textrm{For any $0<\varepsilon<1/2$, we employ H\"older's inquality}\\
&\le \frac{1}{a_4}\frac{1}{\log N} \Bigg\{c_1 \sum_{i=1}^N \mathbb{P}
\Big[c_1+c_2\|\bm{y}_i\|_2+c_3\|\bm{y}_i\|_2^2>M_N\Big]\\
& \quad + c_2 \sum_{i=1}^N \mathbb{E}\Big[
(\|\bm{y}_i\|_2)^{\frac{1}{\varepsilon}}\Big] ^{\varepsilon}
\mathbb{P}\Big[c_1+c_2\|\bm{y}_i\|_2+c_3\|\bm{y}_i\|_2^2>M_N\Big]^{1-\varepsilon}\\
& \quad + c_3
\sum_{i=1}^N \mathbb{E}\Big[(\|\bm{y}_i\|_2^2)^{\frac{1}{\varepsilon}}\Big]
^{\varepsilon}
\mathbb{P}\Big[c_1+c_2\|\bm{y}_i\|_2+c_3\|\bm{y}_i\|_2^2>M_N\Big]^{1-\varepsilon}
\Bigg\}  .\\
& \textrm{Since all moments of the non-central $\chi^2$-distribution are
  finite, we get}\\
&\le \frac{1}{a_4}\frac{1}{\log N} \Bigg\{c_1 \sum_{i=1}^N \mathbb{P}
\Big[c_1+c_2\|\bm{y}_i\|_2+c_3\|\bm{y}_i\|_2^2>M_N\Big] \\
& \quad + \tilde{c}_2\sum_{i=1}^N
\mathbb{P}\Big[c_1+c_2\|\bm{y}_i\|_2+c_3\|\bm{y}_i\|_2^2>M_N\Big]^{1-\varepsilon}\\
& \quad + \tilde{c}_3 \sum_{i=1}^N \mathbb{P}\Big[c_1+c_2\|\bm{y}_i\|_2+c_3\|\bm{y}_i\|_2^2>M_N\Big]^{1-\varepsilon}
\Bigg\}  .\\
\end{flalign*}
\begin{flalign*}
&\textrm{With (\ref{041009}) we finally obtain} \\
& \le \frac{2}{a_4}\frac{1}{\log N} \Big\{ c_1 \sum_{i=1}^N \frac{n}{N^2} +
\tilde{c}_2\sum_{i=1}^N \Big(\frac{n}{N^2}\Big)^{1-\varepsilon} +
\tilde{c}_3 \sum_{i=1}^N \Big(\frac{n}{N^2}\Big)^{1-\varepsilon}
\Big\} \\
& \le \frac{\rho}{\log N}\frac{1}{N^{1-2\varepsilon}}  .
\end{flalign*}
\end{proof}
\normalsize
Now, we have shown that the probability (\ref{231209}) in Lemma \ref{Lemma2} is
large. 
Defining the set $\mathcal{J}$ by
\begin{equation}
\mathcal{J} = \Bigg\{ \sup_{\bm{\phi}^T=(\bm\beta^T,\bm{\eta}^T) \in \bm\Phi} \frac{\Big|V_N(\bm{\phi})-V_N(\bm{\phi}_0)
  \Big|}{(\|\bm\beta-\bm\beta_0\|_1 + \|\bm\eta-\bm\eta_0\|_2)\vee \lambda_0} \le T\lambda_0
\Bigg\} \tag{A.6} \label{171209}
\end{equation}
means that $\mathcal{J}$ has large probability. The rest of the
proof of Theorem 1 is as in \cite{Stad09}.

\section*{Appendix B: Proof of Theorem 2}
The proof of the theorem comprises two main parts. First, we have to show
that three conditions presented in \cite{Stad09} are fulfilled. Afterwards we
can present the proof of the theorem.

\subsection*{Appendix B1: Verification of the conditions}
We have to check Conditions $1-3$ in \cite{Stad09}. Subsequently,
each of these is stated as a Lemma and again for simplicity, we drop the
index $i$.\\
Let us introduce a slightly different parametrization, which coincides with
the one in \cite{Stad09} and which simplifies the
proofs below. For $\bm{x}_k \in
\mathbb{R}^p, k=1,\ldots,n$, define
$\bm{X}^T=(\bm{x}_1,\ldots,\bm{x}_n)$. Let
\begin{align*}
 \bm\vartheta^T &=  \bm\vartheta(\bm{X})^T =
 (\bm{x}_1^T\bm\beta,\ldots,\bm{x}_n^T\bm\beta,\bm\theta^T, 2 \log \sigma)
 = ((\bm{X}\bm\beta)^T,\bm\theta^T, \varrho)\\
& = ((\bm{X}\bm\beta)^T,\bm\eta^T )
 =(\bm\xi(\bm{X})^T,\bm\eta^T)
 =(\bm\xi^T,\bm\eta^T) \in \mathbb{R}^{d}
\end{align*}
 be the parameter vector with dimension $d:=n+q^*+1$. By
(12), the parameter space is bounded by the constant $K$: $$\bm\Theta
\subset \{\bm\vartheta \in \mathbb{R}^d: \|\bm\vartheta\|_{\infty}  \le K, \bm\Psi>0\}$$ where
$\|\bm\vartheta\|_{\infty}:=\max_{1 \le j \le d} |\vartheta_j|$.
Let $\{f_{\bm\vartheta}(\bm{y}), \bm\vartheta \in \bm\Theta \}$ be the Gaussian density
of $\bm{y}$ and $\ell_{\bm{\vartheta}}(\bm{y})$ its log-likelihood function. Moreover, let $\bm\vartheta_0$ be the true parameter vector.
\begin{lemma} \label{221109} Under Assumption 1 holds
\begin{equation*}
\sup_{\bm\vartheta \in \bm\Theta} \max_{(j_1,j_2,j_3) \in \{1,\ldots,d \}^3}
\Bigg|\frac{\partial^3}{\partial \vartheta_{j_1} \partial
\vartheta_{j_2} \partial \vartheta_{j_3}} \ell_{\bm{\vartheta}}(\bm{y})  \Big| \le G_2(\bm{y})
 ,
\end{equation*}
where
\begin{equation*}
\sup_{\bm{X} \in \mathcal{X}^n} \int G_2(\bm{y})f_{\bm\vartheta_0}(\bm{y}) d\mu(\bm{y}) \le C_2 <
\infty  .
\end{equation*}
\end{lemma}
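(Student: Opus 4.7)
The plan is to compute the third partial derivatives of $\ell_{\bm\vartheta}$ explicitly in schematic form and show that each resulting term is a polynomial in $\bm y$ of degree at most two, with coefficients bounded uniformly on $\bm\Theta$. Write
$$\ell_{\bm\vartheta}(\bm{y}) = -\tfrac{n}{2}\log(2\pi) - \tfrac{1}{2}\log|\bm{V}(\bm\eta)| - \tfrac{1}{2}(\bm{y}-\bm\xi)^T\bm{V}(\bm\eta)^{-1}(\bm{y}-\bm\xi),$$
so that $\bm\xi$ enters at most quadratically and $\bm\eta$ enters through $\log|\bm V|$ and $\bm V^{-1}$. Any mixed partial derivative in $\partial\vartheta_{j_1}\partial\vartheta_{j_2}\partial\vartheta_{j_3}$ is therefore a sum of terms of the form (i) a function of $\bm\eta$ alone, (ii) a linear functional of $\bm y-\bm\xi$, or (iii) a quadratic form $(\bm y-\bm\xi)^T\bm M(\bm\vartheta)(\bm y-\bm\xi)$, with $\bm M(\bm\vartheta)$ a product of matrices built from $\bm V^{-1}$ and $\partial\bm V/\partial\eta_\ell$ via the identity $\partial\bm V^{-1}/\partial\eta_\ell = -\bm V^{-1}(\partial\bm V/\partial\eta_\ell)\bm V^{-1}$ and its iterates.

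The first key step is to bound the relevant matrix norms uniformly on $\bm\Theta$. By Assumption~1, the eigenvalues of $\bm Z^T\bm Z$ are bounded by $K$, so those of $\bm Z\bm\Psi_{\bm\theta}\bm Z^T$ are bounded by a constant $C(K,q)$; since $|\varrho|\le K$ gives $e^{-K}\le e^\varrho\le e^K$, the eigenvalues of $\bm V$ lie in a fixed compact subinterval of $(0,\infty)$, and $\|\bm V^{-1}\|_2$ is uniformly bounded. Analogously, $\partial\bm V/\partial\eta_\ell$ and its higher partials are polynomials in the entries of $\bm\Psi_{\bm\theta}$ and in $e^\varrho$, and hence are bounded in operator norm uniformly on $\bm\Theta$. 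Combining these estimates, every coefficient $\bm M(\bm\vartheta)$ appearing above satisfies $\|\bm M(\bm\vartheta)\|_2\le C_1(K,n,q)$.

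Next, applying Cauchy--Schwarz to the quadratic form and using $|\bm x^T\bm\beta|\le K$ for every $\bm x\in\mathcal X$ (so that $\|\bm\xi\|_2\le\sqrt{n}\,K$), one obtains
$$\sup_{\bm\vartheta\in\bm\Theta}\max_{(j_1,j_2,j_3)}\Bigl|\tfrac{\partial^3 \ell_{\bm\vartheta}(\bm y)}{\partial\vartheta_{j_1}\partial\vartheta_{j_2}\partial\vartheta_{j_3}}\Bigr|\le c_1+c_2\|\bm y\|_2+c_3\|\bm y\|_2^2 =: G_2(\bm y),$$
with constants $c_j$ depending only on $K$, $n$, $q$, not on $\bm X$. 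This is the same template as in Lemma~\ref{Lemma1}; only the orders of the derivatives change.

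Finally, under $f_{\bm\vartheta_0}$ the vector $\bm y$ is Gaussian with mean $\bm X\bm\beta_0$ (bounded in $\ell_\infty$ by $K$, hence in $\ell_2$ by $\sqrt{n}\,K$) and covariance $\bm V_0$ of uniformly bounded operator norm, so $\mathbb E\|\bm y\|_2$ and $\mathbb E\|\bm y\|_2^2$ are finite and bounded uniformly in $\bm X\in\mathcal X^n$. Hence $\sup_{\bm X}\int G_2(\bm y)f_{\bm\vartheta_0}(\bm y)\,d\mu(\bm y)\le C_2<\infty$. The only real labor is the bookkeeping in enumerating the terms of the third derivative; the uniform a~priori control of $\bm V^{-1}$, its $\bm\eta$-derivatives, and $\bm\xi$ is what makes each such term manageable.
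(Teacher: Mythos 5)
Your proposal is correct and follows essentially the same route as the paper, which simply sets $G_2(\bm{y}) = d_1 + d_2\|\bm{y}\|_2 + d_3\|\bm{y}\|_2^2$ and invokes the same techniques as in Lemma 1 of Appendix A (Cauchy--Schwarz plus uniform eigenvalue bounds on $\bm{V}^{-1}$ and the $\bm\eta$-derivatives of $\bm{V}$ obtained from Assumption 1 and the boundedness of the parameter space). You have merely written out the bookkeeping that the paper leaves implicit, including the integrability step via the uniformly bounded Gaussian moments.
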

\begin{proof} Set $G_2(\bm{y}):=d_1 + d_2\|\bm{y}\|_2 + d_3\|\bm{y}\|_2^2$
for appropriate constants $d_1,d_2,d_3 \in \mathbb{R}_+$. The proof makes use
of the
same techniques as the proof of Lemma \ref{Lemma1} in the  Appendix A. 
\end{proof}

\begin{lemma} \label{231109} Under Assumption 2 (b), the Fisher information matrix $\mathcal{I}(\bm\xi(\bm{X}),\bm\eta)$ is
strictly positive definite.
\end{lemma}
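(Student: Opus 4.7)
The approach is to compute the Fisher information $\mathcal{I}(\bm\xi(\bm{X}),\bm\eta)$ explicitly in block form corresponding to the split $\bm\vartheta = (\bm\xi^T,\bm\eta^T)^T$, recognize the classical Gaussian orthogonality between mean and covariance parameters (making the off-diagonal block vanish), and then check strict positive-definiteness of each of the two diagonal blocks. This reduces the lemma essentially to Assumption 2 (b) together with the positive-definiteness of $\bm{V}$.

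Starting from
\[
\ell_{\bm\vartheta}(\bm{y}) = -\tfrac{n}{2}\log(2\pi) - \tfrac{1}{2}\log|\bm{V}(\bm\eta)| - \tfrac{1}{2}(\bm{y}-\bm\xi)^T \bm{V}(\bm\eta)^{-1}(\bm{y}-\bm\xi),
\]
the scores are $\partial_{\bm\xi}\ell = \bm{V}^{-1}(\bm{y}-\bm\xi)$ and
\[
\partial_{\eta_r}\ell = -\tfrac{1}{2}\mathrm{tr}\!\bigl(\bm{V}^{-1}\partial_{\eta_r}\bm{V}\bigr) + \tfrac{1}{2}(\bm{y}-\bm\xi)^T\bm{V}^{-1}(\partial_{\eta_r}\bm{V})\bm{V}^{-1}(\bm{y}-\bm\xi).
\]
Taking expectations of outer products of the scores under $\bm\vartheta_0$ and using $\mathbb{E}[(\bm{y}-\bm\xi)(\bm{y}-\bm\xi)^T] = \bm{V}$ immediately gives $\mathcal{I}_{\xi\xi} = \bm{V}^{-1}$. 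The standard Gaussian trace identity $\mathrm{Cov}\bigl((\bm{y}-\bm\xi)^T A (\bm{y}-\bm\xi),\,(\bm{y}-\bm\xi)^T B (\bm{y}-\bm\xi)\bigr) = 2\,\mathrm{tr}(A\bm{V} B\bm{V})$, applied with $A = \bm{V}^{-1}(\partial_{\eta_r}\bm{V})\bm{V}^{-1}$ and $B = \bm{V}^{-1}(\partial_{\eta_s}\bm{V})\bm{V}^{-1}$, yields
\[
\mathcal{I}_{\eta_r\eta_s} = \tfrac{1}{2}\,\mathrm{tr}\!\bigl(\bm{V}^{-1}(\partial_{\eta_r}\bm{V})\bm{V}^{-1}(\partial_{\eta_s}\bm{V})\bigr) = \tfrac{1}{2}(\bm\Omega)_{rs},
\]
with $\bm\Omega$ exactly the matrix appearing in Assumption 2 (b). Finally the cross block $\mathcal{I}_{\xi\eta_r}$ is proportional to $\mathbb{E}\bigl[(\bm{y}-\bm\xi)(\bm{y}-\bm\xi)^T A (\bm{y}-\bm\xi)\bigr]$ plus a constant times $\mathbb{E}[\bm{y}-\bm\xi]=0$; the first term is a third central moment of a Gaussian vector and therefore vanishes.

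With $\mathcal{I}(\bm\xi,\bm\eta)$ thus seen to be block-diagonal, strict positive-definiteness follows block by block: the mean block $\bm{V}^{-1}$ is positive definite because $\bm{V} = \bm{Z}\bm\Psi_{\bm\theta}\bm{Z}^T + e^{\varrho}\bm{I}$ contains the strictly positive-definite summand $e^{\varrho}\bm{I}$, and the covariance block $\tfrac{1}{2}\bm\Omega$ is strictly positive definite precisely by Assumption 2 (b). I do not anticipate a real obstacle here; the only mildly delicate step is the vanishing of the cross block, which is just the third-moment identity for Gaussians, and the bookkeeping has to be done carefully so that the matrix $\bm\Omega$ of Assumption 2 (b) matches the $\bm\eta$-block of the Fisher information up to the factor $\tfrac{1}{2}$.
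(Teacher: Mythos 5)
Your proposal is correct and follows essentially the same route as the paper: both rest on the block-diagonal form of the Gaussian Fisher information, with the mean block $\bm{V}^{-1}$ positive definite since $\bm{V}=\bm{Z}\bm\Psi_{\bm\theta}\bm{Z}^T+e^{\varrho}\bm{I}\succ 0$, and the covariance block equal (up to the factor $\tfrac12$) to the matrix $\bm\Omega$ of Assumption 2 (b). The only difference is that the paper simply cites the standard formula for the Fisher information of a multivariate Gaussian, whereas you derive it from the scores and the third-moment identity; this adds detail but not a new idea.
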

\begin{proof} For $\bm{y} \thicksim \mathcal{N}_n(\bm\xi,\bm{V})$ with
$\bm{V}=\bm{Z}\bm\Psi\bm{Z}^T + e^{\varrho}\bm{I}$, the Fisher information matrix
is given by \citep{McCCS01}
\small
\begin{equation*}
\mathcal{I}(\bm\xi,\bm\eta) =
\begin{pmatrix}
\bm{V}^{-1} & \bm{0} \\
\bm{0} & \frac{1}{2} \{\trace(\bm{V}^{-1}\frac{\partial \bm{V}}{\partial
    \vartheta_r}\bm{V}^{-1}\frac{\partial \bm{V}}{\partial
    \vartheta_s})\}_{r,s=n+1}^{n+q^*+1} \\

\end{pmatrix}
.
\end{equation*}
\normalsize
The upper left part of the matrix is given by $\bm{V}^{-1}$,
which is positive definite. By Assumption 2 (b), the lower right
part is also positive definite, hence we get the claim.
\end{proof}

\begin{lemma} \label{241109} Under Assumption 2 (a), for all $\epsilon > 0$, there exists an $\alpha_{\epsilon}>0$, such that
\begin{equation*}
\inf_{\bm{X} \in \mathcal{X}^n} \inf_{\bm\vartheta \in \bm\Theta, \|\bm\vartheta-\bm\vartheta_0\|_2 >
  \epsilon} \mathcal{E}(\bm\vartheta(\bm{X}) |\bm\vartheta_0(\bm{X})) \ge
\alpha_{\epsilon} .
\end{equation*}
\end{lemma}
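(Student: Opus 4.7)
The plan is to prove this uniform lower bound by exploiting the explicit Gaussian form of the KL divergence together with identifiability. First I would write out
\begin{equation*}
\mathcal{E}(\bm\vartheta \vert \bm\vartheta_0) = \tfrac{1}{2}\Bigl[\log\tfrac{\vert\bm V\vert}{\vert\bm V_0\vert} - n + \operatorname{tr}(\bm V^{-1}\bm V_0) + (\bm\xi-\bm\xi_0)^T \bm V^{-1}(\bm\xi-\bm\xi_0)\Bigr]
\end{equation*}
and split it as $\mathcal{E} = \mathcal{E}_{\mathrm{mean}} + \mathcal{E}_{\mathrm{cov}}$, where $\mathcal{E}_{\mathrm{mean}} := \tfrac{1}{2}(\bm\xi-\bm\xi_0)^T\bm V^{-1}(\bm\xi-\bm\xi_0)$ and $\mathcal{E}_{\mathrm{cov}}$ is the KL divergence between two centered Gaussians with covariances $\bm V_0$ and $\bm V$. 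Both summands are nonnegative. Since $\bm\vartheta \in \bm\Theta$ is bounded in $\ell_\infty$ by $K$ and the eigenvalues of $\bm Z_i^T\bm Z_i$ are bounded by $K$ by Assumption 1, both $\|\bm V\|_2$ and $\|\bm V^{-1}\|_2$ are bounded above, uniformly in $\bm X\in\mathcal X^n$ and $\bm\vartheta\in\bm\Theta$, by constants depending only on $K$ and $n$.

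Next I would establish identifiability, i.e.\ that $\mathcal{E}(\bm\vartheta\vert\bm\vartheta_0)=0$ implies $\bm\vartheta=\bm\vartheta_0$. Vanishing of $\mathcal{E}_{\mathrm{mean}}$ combined with the uniform lower bound on the smallest eigenvalue of $\bm V^{-1}$ forces $\bm\xi=\bm\xi_0$. Vanishing of $\mathcal{E}_{\mathrm{cov}}$ is equivalent to $\bm V=\bm V_0$, i.e.
\begin{equation*}
\bm Z(\bm\Psi_{\bm\theta}-\bm\Psi_{\bm\theta_0})\bm Z^T = (e^{\varrho_0}-e^{\varrho})\bm I_n .
\end{equation*}
Assumption 2(a) states that $\bm Z_i\bm\Psi \bm Z_i^T$ has at least two distinct eigenvalues, so it cannot be a multiple of the identity; applied to any candidate $\bm\Psi$ along the segment connecting $\bm\Psi_{\bm\theta}$ and $\bm\Psi_{\bm\theta_0}$, this rules out a nonzero right-hand side, hence $\varrho=\varrho_0$ and $\bm Z(\bm\Psi_{\bm\theta}-\bm\Psi_{\bm\theta_0})\bm Z^T=0$. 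Together with the one-to-one Cholesky parametrization of $\bm\Psi_{\bm\theta}$ by $\bm\theta$ this yields $\bm\theta=\bm\theta_0$, so $\bm\vartheta\mapsto\mathcal{E}(\bm\vartheta\vert\bm\vartheta_0)$ vanishes only at $\bm\vartheta_0$.

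Finally I would obtain the uniform positive lower bound by a compactness argument. The KL is a continuous function of $\bm\vartheta$ on the compact set $\bm\Theta$, strictly positive on the closed subset $\{\bm\vartheta:\|\bm\vartheta-\bm\vartheta_0\|_2\ge\epsilon\}$ by the previous step, so at each fixed $\bm X$ the infimum $\alpha_\epsilon(\bm X)$ is attained and positive. To get a bound uniform in $\bm X$, one observes that $\bm X$ enters only through the $n\times n$ objects $\bm Z\bm\Psi\bm Z^T$, $\bm Z\bm\Psi_0\bm Z^T$ and $\bm X(\bm\beta-\bm\beta_0)$, all of which live in a compact set by Assumption 1 and the bound $|\bm x^T\bm\beta|\le K$ from (\ref{211209}); the map from this compact space of ``reduced designs'' into $\mathbb R_{>0}$ defined by the KL formula is continuous, so the infimum is attained at some representative and is strictly positive.

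The main obstacle is the uniformity in $\bm X\in\mathcal X^n$: since $\mathcal X^n$ is not assumed compact, one cannot directly invoke compactness over $\bm X$. The escape route is that the KL in the Gaussian case depends on the design only through the bounded quantities $\bm\xi=\bm X\bm\beta$ and $\bm Z\bm\Psi\bm Z^T$, so compactness of the reduced space carries the argument. A subtler point is the clean use of Assumption 2(a) on the covariance side to exclude the pathological identity $\bm Z\bm\Psi\bm Z^T - c\bm I=\bm Z\bm\Psi_0\bm Z^T$ with $c\ne 0$; this is where the ``at least two different eigenvalues'' hypothesis is essential.
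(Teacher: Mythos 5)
Your overall route is the paper's: write the excess risk as the explicit Gaussian Kullback--Leibler divergence, split it into a mean part $\tfrac12(\bm\xi-\bm\xi_0)^T\bm{V}^{-1}(\bm\xi-\bm\xi_0)$ and a covariance part, use the uniform bounds on $\bm{V}$ and $\bm{V}^{-1}$ coming from $\|\bm\eta\|_\infty\le K$ and Assumption 1, and invoke Assumption 2(a) for identifiability of the variance parameters. You are in fact more explicit than the paper about the last step: the paper stops at observing that the two nonnegative terms vanish only at $\bm\vartheta_0$ and then asserts the claim, whereas you spell out the compactness argument and correctly identify that uniformity over the non-compact $\mathcal{X}^n$ must be routed through the bounded reduced quantities $\bm{X}\bm\beta$ and $\bm{Z}\bm\Psi\bm{Z}^T$.

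The one step that does not hold up as written is your use of Assumption 2(a) to exclude $\bm{Z}(\bm\Psi_{\bm\theta}-\bm\Psi_{\bm\theta_0})\bm{Z}^T=c\,\bm{I}$ with $c\ne 0$. Arguing that no $\bm{Z}\bm\Psi\bm{Z}^T$ can be a multiple of the identity and applying this ``along the segment'' proves nothing here: if $\bm{Z}\bm\Psi_{\bm\theta}\bm{Z}^T=\bm{Z}\bm\Psi_{\bm\theta_0}\bm{Z}^T+c\,\bm{I}$, then every matrix on the segment equals $\bm{Z}\bm\Psi_{\bm\theta_0}\bm{Z}^T+tc\,\bm{I}$, which still has two distinct eigenvalues, so Assumption 2(a) is violated nowhere on the segment; moreover the difference $\bm\Psi_{\bm\theta}-\bm\Psi_{\bm\theta_0}$ need not be positive definite, so 2(a) does not even apply to it. The paper instead keeps the eigenvalues in play: it writes the covariance part as $\tfrac12\sum_{j=1}^n\{u_j-\log u_j-1\}$ with $u_j=(\omega_j+e^{\varrho_0})/(\omega_j+e^{\varrho})$, notes this vanishes only if every $u_j=1$, and appeals to Assumption 2(a) at that point. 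In the leading case $\bm\Psi=\theta^2\bm{I}$ your argument can be repaired: $(\theta^2-\theta_0^2)\bm{Z}\bm{Z}^T=c\,\bm{I}$ with $\theta^2\ne\theta_0^2$ would force $\bm{Z}\bm\Psi\bm{Z}^T$ to be a multiple of the identity, which 2(a) forbids, and $\theta^2=\theta_0^2$ then forces $c=0$. Note also that $\bm{Z}(\bm\Psi_{\bm\theta}-\bm\Psi_{\bm\theta_0})\bm{Z}^T=\bm{0}$ yields $\bm\theta=\bm\theta_0$ only when $\bm{Z}$ has full column rank; injectivity of the Cholesky map $\bm\theta\mapsto\bm\Psi_{\bm\theta}$ alone is not enough. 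These identifiability subtleties are glossed over in the paper as well, but the segment argument is the point where your write-up makes a concretely invalid inference.
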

\begin{proof} Let $\bm\vartheta^T=(\bm\xi^T,\bm\eta^T)$, 
$\bm\vartheta_0^T=(\bm\xi_0^T,\bm\eta_0^T)$, $\bm{V}$ and $\bm{V}_0$ the
corresponding covariance matrices. Then $\log f_{\bm\vartheta_0}(\bm{y})-\log f_{\bm\vartheta}(\bm{y}) = \frac{1}{2} \log|\bm{V}| -\frac{1}{2}
\log|\bm{V}_0| +\frac{1}{2}(\bm{y}-\bm\xi)^T\bm{V}^{-1}(\bm{y}-\bm\xi)
-\frac{1}{2}(\bm{y}-\bm\xi_0)^T\bm{V}_0^{-1} (\bm{y}-\bm\xi_0)$. Since 
$\mathcal{E}(\bm\vartheta|\bm\vartheta_0):=\mathbb{E}_{\bm\vartheta_0}\Big[\log f_{\bm\vartheta_0}(\bm{y})-\log f_{\bm\vartheta}(\bm{y})\Big]$, it follows
\begin{equation*}
\mathcal{E}(\bm\vartheta|\bm\vartheta_0) = \frac{1}{2}
\Bigg[\log\frac{|\bm{V}|}{|\bm{V}_0|} + \trace(\bm{V}^{-1}\bm{V}_0) +
(\bm\xi_0-\bm\xi)^T\bm{V}^{-1}(\bm\xi_0-\bm\xi)-n \Bigg].
\end{equation*}

By definition of the excess risk $\mathcal{E}(\bm\vartheta | \bm\vartheta_0) \ge
0$. Denote $\bm\eta^T=(\bm\theta^T,\varrho)$ and $\bm\eta_0^{T} =
(\bm\theta_0^{T},\varrho_0)$, then we can detail:
\begin{equation*}
\log\frac{|\bm{V}|}{|\bm{V}_0|} = -\sum_{j=1}^n \log \Bigg(\frac{\omega_j +
  e^{\varrho_0}}{\omega_j + e^{\varrho}}\Bigg) \quad , \quad \trace(\bm{V}^{-1}\bm{V}_0) = \sum_{j=1}^n \frac{\omega_j +
  e^{\varrho_0}}{\omega_j + e^{\varrho}}.
\end{equation*}
Thus, we get
\begin{equation*}
\mathcal{E}(\bm\vartheta | \bm\vartheta_0) = \frac{1}{2} \Big\{(\bm\xi_0-\bm\xi)^T \bm{V}^{-1}(\bm\xi_0-\bm\xi) \Big\} +\frac{1}{2} \sum_{j=1}^n \Bigg\{\frac{\omega_j +
  e^{\varrho_0}}{\omega_j +
  e^{\varrho}} - \log\Bigg(\frac{\omega_j +
  e^{\varrho_0}}{\omega_j +  e^{\varrho}}\Bigg) -1  \Bigg\}.
\end{equation*}
The first term is strictly positive if $\bm\xi_0 \ne \bm\xi$ and zero iff $\bm\xi_0 =
\bm\xi$. The second term is a function of the form $u-\log(u) -1 \ge 0$ for $u\ge 0$. The
second term is only zero if all terms are exactly zero. Due to
Assumption 2 (a), we get the claim.
\end{proof}

\subsection*{Appendix B2: Main proof of Theorem 2}
Let us write $\| \bm{W} \bm\beta \|_1 := \sum_{k=1}^p w_k | \beta_k | $.
Using the definition of $\hat{\bm\phi}$, and on ${\cal J}$,
we have the basic inequality
$$\bar {\cal E} (\hat{\bm\phi} \vert \bm\phi_0) +
\lambda   \| \bm{W} \hat{\bm\beta} \|_1 \le
T \lambda_0 \biggl [  ( \| \hat{\bm\beta} - \bm\beta_0 \|_1 + \|\hat{\bm\eta} - \bm\eta_0 \|_2 
 ) \vee \lambda_0 
\biggr ] + \lambda \| \bm{W} \bm\beta_0 \|_1  .$$

Invoking the triangle inequality $\|\bm{W}\bm\beta_0 \|_1 - 
\| \bm{W}\hat{\bm\beta}_{S_0} \|_1 \le \| \bm{W}( \hat{\bm\beta}_{S_0} - \bm\beta_{0}) \|_1 $,
we obtain
$$  \bar {\cal E} (\hat{\bm\phi} \vert \bm\phi_0 ) 
+ \lambda \| \bm{W} \hat{\bm\beta}_{S_0^c} \|_1  \le 
T \lambda_0 \biggl [  ( \| \hat{\bm\beta} - \bm\beta_0 \|_1 + \|\hat{\bm\eta} - \bm\eta_0 \|_2 
 ) \vee \lambda_0 
\biggr ]+ \lambda \| \bm{W}( \hat{\bm\beta}_{S_0} - \bm\beta_0 ) \|_1 . $$
Since $w_k \ge 1/\delta $ for $k \in S_0^c$ and $w_k \le 1/\delta $ for $k \in S_0$, we arrive at
\begin{equation} \tag{B.1} \label{start1}
  \bar {\cal E} (\hat{\bm\phi} \vert \bm\phi_0 ) 
+ \lambda/\delta  \| \hat{\bm\beta}_{S_0^c} \|_1  \le 
T \lambda_0 \biggl [  ( \| \hat{\bm\beta} - \bm\beta_0 \|_1 + \|\hat{\bm\eta} - \bm\eta_0 \|_2 
 ) \vee \lambda_0 
\biggr ]+ \lambda/\delta  \| \hat{\bm\beta}_{S_0} - \bm\beta_0  \|_1 .
\end{equation} 

By the arguments in \cite{Stad09}, for a constant $c_0$ independent
of $N$, $n$, $p$ and the design,
\begin{equation} \tag{B.2} \label{start2}
\bar {\cal E} (\hat{\bm\phi} \vert \bm\phi_0 ) \ge
( \hat{\bm\beta} - \bm\beta_0 )^T \bm\Sigma_{N,n} ( \hat{\bm\beta} - \bm\beta_0 ) / c_0^2 
+ \| \hat{\bm\eta} - \bm\eta_0\|_2^2 / c_0^2 . 
 \end{equation}
 
 {\bf Case 1} Suppose that
$$\| \hat{\bm\beta} - \bm\beta_0 \|_1 + \| \hat{\bm\eta} - \bm\eta_0 \|_2 \le \lambda_0 . $$
Then we find from (\ref{start1}),
$$ \bar {\cal E} (\hat{\bm\phi} \vert \bm\phi_0 )  \le \bar {\cal E} (\hat{\bm\phi}_{\lambda} \vert \bm\phi_0 ) 
+ \lambda / \delta  \| \hat{\bm\beta}_{S_0^c} \|_1  \le T \lambda_0^2  + \lambda/\delta  \| \hat{\bm\beta}_{S_0} - \bm\beta_0  \|_1 
\le T \lambda_0^2  + \lambda/\delta  \| \hat{\bm\beta} - \bm\beta_0  \|_1$$
and hence
\begin{align*}
\bar {\cal E} (\hat{\bm\phi} \vert \bm\phi_0 ) + 2\lambda/\delta  \|
\hat{\bm\beta} - \bm\beta_0\|_1
&\le
T \lambda_0^2  + 3\lambda /\delta \| \hat{\bm\beta} - \bm\beta_0 \|_1\\
&\le (3\lambda /\delta + T \lambda_0) \lambda_0.
\end{align*}
{\bf Case 2} Suppose that
$$\| \hat{\bm\beta} - \bm\beta_0 \|_1 + \| \hat{\bm\eta} - \bm\eta_0 \|_2 \ge \lambda_0 , $$
and that
$$T \lambda_0 \| \hat{\bm\eta} - \bm\eta_0 \|_2 
\ge ( \lambda /\delta + T \lambda_0 ) \| \hat{\bm\beta}_{S_0} - \bm\beta_{0} \|_1 . $$ 
Then we get, adding $( \lambda/\delta  - \lambda_0 T) \| \hat{\bm\beta}_{S_0} - \bm\beta_0 \|_1$
to left- and right-hand side of (\ref{start1}), 
\begin{eqnarray*}
& &\bar {\cal E} (\hat{\bm\phi}  \vert \bm\phi_0 ) + ( \lambda/\delta - T \lambda_0)
\| \hat{\bm\beta} -\bm\beta_0  \|_1 \le (\lambda/\delta +T\lambda_0)  \| \hat{\bm\eta} - \bm\eta_0 \|_2 \\
&\le& (\lambda/\delta  + T \lambda_0)^2  c_0^2/2+ \| \hat{\bm\eta} - \bm\eta_0 \|_2^2 / (2c_0^2) \\
&\le& (\lambda/\delta  +T\lambda_0)^2 c_0^2 /2+  \bar {\cal E} (\hat{\bm\phi} \vert \bm\phi_0
)/2,
\end{eqnarray*}
where in the last inequality, we applied (\ref{start2}). 
So then
$$ \bar {\cal E} (\hat{\bm\phi}  \vert \bm\phi_0 ) + 2 ( \lambda/\delta  - T \lambda_0)
\| \hat{\bm\beta} - \bm\beta_0 \|_1 \le (\lambda/\delta +T \lambda_0)^2 c_0^2.$$

  {\bf Case 3} Suppose that
$$\| \hat{\bm\beta} - \bm\beta_0 \|_1 + \| \hat{\bm\eta} - \bm\eta_0 \|_2 \ge \lambda_0 , $$
and that
$$T \lambda_0 \| \hat{\bm\eta} - \bm\eta_0 \|_2 
\le ( \lambda/\delta  + T \lambda_0 ) \| \hat{\bm\beta}_{S_0} -\bm\beta_0 \|_1 . $$
Then we have
\begin{equation} \tag{B.3} \label{start3}
\bar {\cal E} (\hat{\bm\phi} \vert \bm\phi_0 ) +  ( \lambda /\delta - T \lambda_0)
\| \hat{\bm\beta}_{S_0^c}  \|_1 \le 2 (\lambda/\delta + T \lambda_0 ) \| \hat{\bm\beta}_{S_0}-
\bm\beta_0\|_1 . 
\end{equation}
Because $\lambda_0 \le \lambda / (2 \delta)$, inequality (\ref{start3}) implies
$$\| \hat{\bm\beta}_{S_0^c} \|_1 \le 6 \| \hat{\bm\beta}_{S_0} - \bm\beta_0 \|_1 . $$
We can therefore apply the restricted
  eigenvalue condition to
$\hat{\bm\beta} - \bm\beta_0 $. 
But first, add $( \lambda/\delta - \lambda_0 T) \| \hat{\bm\beta}_{S_0} - \bm\beta_0 \|_1$
to the left- and right-hand side of (\ref{start3}).
The restricted eigenvalue condition now gives (invoking
$2 ( \lambda/\delta + T \lambda_0) + ( \lambda\delta  - T \lambda_0)
 \le 
3 (\lambda/ \delta + T \lambda_0)$)
\begin{eqnarray*}
& & 
\bar {\cal E} (\hat{\bm\phi} \vert \bm\phi_0 ) +  ( \lambda/\delta  - T \lambda_0)
\| \hat{\bm\beta} - \bm\beta_0 \|_1\\
& \le & 3 (\lambda/\delta + T \lambda_0 )
\sqrt s_0  \| \hat{\bm\beta}_{S_0} - \bm\beta_0 \|_2 \\
&\le& 3 (\lambda/\delta + T \lambda_0 ) \sqrt s_0  \kappa 
\sqrt { (\hat{\bm\beta} - \bm\beta_0 )^T \bm\Sigma_{N,n} (\hat{\bm\beta} - \bm\beta_0) } 
\\
&\le& 9 (\lambda + T \lambda_0 )^2 c_0^2 \kappa^2 s_0 /2 + 
 \bar {\cal E} (\hat{\bm\phi} \vert \bm\phi_0 )/2,
\end{eqnarray*}
applying again (\ref{start2}) in the last step.
So we arrive at 
$$ \bar {\cal E} (\hat{\bm\phi} \vert \bm\phi_0 ) +  2 ( \lambda/\delta - T \lambda_0)
\| \hat{\bm\beta} - \bm\beta_0  \|_1\le 9 (\lambda/\delta + T \lambda_0 )^2 c_0^2 \kappa^2 s_0.$$

\subsection*{Appendix B3: Proof of Corollary 3}
For the estimator in (13) we have:
$$\|\hat{\bm\beta}_{init} - \bm\beta_0 \|_{\infty} \le \|\bm{\bm\beta} -
\bm\beta_0 \|_1 \le \delta_{init}.$$
Consider $k \in S_0$ with $|\beta_{0,k}|>\delta_{init}$. Then it must hold
that $\hat{\beta}_k \ne 0$ (since otherwise, if $\hat{\beta_k}$ were equal
to zero, $\|\hat{\bm\beta} - \bm\beta_0  \|_{\infty} \le |\hat{\beta}_k
  -\beta_{0,k}|=|\beta_{0,k}| > \delta_{init}$ which is a contradiction
  to the $\ell_{\infty}$-estimation error bound). The argument for the
  adaptive $\ell_1$-penalized estimator (14) is analogous.

\section*{Appendix C: Computational details of Algorithm 1}

\indent \textit{(0): Initial value $\bm{\phi}^0$.} As a starting value for $\bm\beta$, we choose an ordinary Lasso solution by cross-validation
ignoring the grouping structure among the observations. By doing so, we
ensure that we are at least as good (with respect to the objective
function) as an ordinary Lasso in a linear model. The calculation of the
starting value for $\bm\theta$ depends on the specific structure of
$\bm\Psi$ and may be performed as in the (Gauss-Seidel) iteration.
The point we would like to make is that those elements that are estimated
as zero in $\bm{\phi}^0$ may escape from zero and non-vanishing elements of
$\bm{\phi}^0$ can be set to zero during Algorithm 1.\\

\textit{(1): Choice of $h^{\ell}$.} For numerical convergence (see Theorem
3), we require that $h^{\ell}$ is positive and bounded. We
use the diagonal elements of the Fisher information $\mathcal{I}(\bm{\phi})$ and, as proposed in \cite{Tseng07}, for constants $c_{min}$ and
$c_{max}$ we set $h^{\ell} = \min(\max(\mathcal{I}(\bm{\phi})_{\mathcal{S}^{\ell}
  \mathcal{S}^{\ell}},c_{min}),c_{max})$ with $c_{min}=10^{-6}$ and
$c_{max}=10^8$ in the \texttt{R} package \texttt{lmmlasso}.\\

\textit{(2): Calculation of $d^{\ell}$.} We have to distinguish whether the index $\mathcal{S}^{\ell}$
appears in $P(\bm\phi)$ or not:
\begin{equation} \tag{C.1}
\footnotesize
d^{\ell} = \begin{cases}
\displaystyle \median
 \Bigg(\frac{\lambda - \frac{\partial}{\partial \phi_{\mathcal{S}^{\ell}}}
   g(\bm{\phi}^{\ell})}{h^{\ell}},-\beta_{\mathcal{S}^{\ell}},\frac{-\lambda-\frac{\partial}{\partial \phi_{\mathcal{S}^{\ell}}}
 g(\bm{\phi}^{\ell})}{h^{\ell}}\Bigg)& \mathcal{S}^{\ell}\in \{1,\ldots,p\},\\
-\frac{\partial}{\partial \phi_{\mathcal{S}^{\ell}}} g(\bm{\phi}^{\ell})/h^{\ell}
&\mbox{else}.
\end{cases}
\end{equation}\\

\textit{(3): Choice of $\alpha^{\ell}$.} The step length $\alpha^{\ell}$ is chosen in such a way that in each step, there
is an improvement in the objective function $Q_{\lambda}(.)$. We use the Armijo
rule which is defined as follows:\\
\textit{Choose $\alpha_{init}^{\ell} >0$ and let $\alpha^{\ell}$ be the largest element of
$\{\alpha_{init}^{\ell}\delta^r\}_{r=0,1,2,..}$ satisfying
\begin{equation*}
Q_{\lambda}(\bm{\phi}^{\ell} +
\alpha^{\ell}d^{\ell}\bm{e}_{\mathcal{S}^{\ell}}) \le
Q_{\lambda}(\bm{\phi}^{\ell}) + \alpha^{\ell} \varrho \triangle^{\ell},
\end{equation*}
 where $\triangle^{\ell} := \partial/ \partial \phi_{\mathcal{S}^{\ell}} g(\bm{\phi}^{\ell})d^{\ell} + \gamma (d^{\ell})^2
h^{\ell} + \lambda P(\bm{\phi}^{\ell}+d^{\ell}\bm{e}_{\mathcal{S}^{\ell}}) - \lambda
P(\bm{\phi}^{\ell})$.}\\

 The choice of the constants comply with the suggestions in
\cite{BerD99} and are $\delta=0.1, \varrho=0.001,\gamma=0$ and
$\alpha_{init}^{\ell}=1$ for all $\ell$.\\

\textit{Simplification of (2) and (3) for the $\bm\beta$-parameter.} If $\mathcal{I}(\bm\phi)_{\mathcal{S}^{\ell}
  \mathcal{S}^{\ell}}$ is not truncated, we take advantage of the fact that
$g(\bm\phi)$ is quadratic with respect to $\bm{\beta}$. Using $\alpha_{init}^{\ell}=1$, the stepsize $\alpha^{\ell}$
chosen by the Armijo rule ($r=0$) leads to the minimum of $g(\bm\phi^{\ell})$ with respect to $\beta_{\mathcal{S}^{\ell}}$. The update
$\hat{\beta}^{\ell+1}_{\mathcal{S}^{\ell}}(\lambda)$ is then given analytically by
\begin{equation} \tag{C.2} \label{111010}
\footnotesize
\hat{\beta}^{\ell+1}_{\mathcal{S}^{\ell}}(\lambda)=\sign\Big((\bm{y}-\tilde{\bm{y}})^T \bm{V}^{-1} \bm{x}_{\mathcal{S}^{\ell}}\Big)
\frac{\Big(|(\bm{y}-\tilde{\bm{y}})^T \bm{V}^{-1}
  \bm{x}_{\mathcal{S}^{\ell}}|-\lambda\Big)_+}{\bm{x}_{\mathcal{S}^{\ell}}^T\bm{V}^{-1} \bm{x}_{\mathcal{S}^{\ell}}} , 
\end{equation}
where $\bm{X}=(\bm{x}_1,\ldots,\bm{x}_p)$, $\tilde{\bm{y}}=\bm{X}^{(-\mathcal{S}^{\ell})}\hat{\bm{\beta}}_{(-\mathcal{S}^{\ell})}^{\ell}$
(leaving out the $\mathcal{S}^{\ell}$th variable), $(.)_+=\max(.,0)$ and
$\sign(.)$ the signum function.\\
Most often, $\mathcal{I}(\bm\phi)_{\mathcal{S}^{\ell} \mathcal{S}^{\ell}}$
is not truncated and hence the analytical formula (\ref{111010}) can be used. This
simplification reduces the computational cost remarkably, especially in
the high-dimensional setup.\\

\textit{Parametrization of $\bm\Psi$.} We parametrize $\bm\Psi$ by a set of unconstrained parameters
$\bm\theta$. A discussion how to parametrize a positive definite
variance-covariance matrix by an unconstrained set of parameters can be found in \cite{PinJB2000} and \cite{Pinh96}. In the current version of the
\texttt{lmmlasso} package we employ the Cholesky decomposition
$\bm\Psi=\bm{LL}^T$ where $\bm\theta$ corresponds to the lower triangular elements of $\bm{L}$.\\

\textit{Choice of the $\lambda$-sequence.} 
We choose a $\lambda_{1}$ sufficiently large such that all penalized
coefficients are zero. We calculate a sequence
$\lambda_{1}>\lambda_{2}>\ldots$ on a log-scale until a model with a certain sparsity level is reached. At 
latest, we stop if the number of selected fixed-effects variables is larger than the
total number of observations. The optimal $\lambda$ is then chosen by
\begin{equation*}
\lambda_{opt}= \argmin_{k \ge 1} BIC_{\lambda_{k}}.
\end{equation*}

\textit{Active-Set Algorithm.}
Assuming that the solution is sparse, we can reduce the computing time by
using an active-set algorithm, which is used in \cite{Meie08} and
\cite{Fried08}. More specifically, we do not cycle through all coordinates,
but we restrict ourselves to the current active set $S(\hat{\bm{\beta}})$ and update all
coordinates of $\hat{\bm{\beta}}$ only every $D$th iteration. This reduces
the computational time considerably.\\

\textit{Proof of Theorem 3.} For the precise definition of cluster and
stationary point we refer to \cite{Tseng07}. It remains to check that the
assumptions in \cite{Tseng07} are fulfilled. More precisely: $\lambda>0$,
$P(.)=|.|_1$ is a proper, convex, continuous function and block-separable
with respect to $\mathcal{S}^{\ell}$, $g(.)$ is continuously differentiable on $dom(P)=\{\bm{\phi}
  |P(\bm{\phi}) < \infty \}$, $c_{min}  \le h^{\ell} \le c_{max}$ for $\ell \ge 0$
  and $0 < c_{min} \le c_{max}$. Moreover, $\sup_{\ell} \alpha^{\ell}>0$ and
  $\inf_{\ell} \alpha_{init}^{\ell} > 0$.

\section*{Appendix D: Simulation study for the low-dimensional setting}
In this
setting, we will compare \textit{lmmLasso} and \textit{lmmadLasso} with the
classical linear mixed-effects framework (\textit{lme}) from \cite{PinJB2000} and both the Lasso and
the adaptive Lasso. The optimal model for the \textit{lme} procedure is
determined by backward elimination.

The two examples are chosen in the following way ($\beta_{0,1}=1$ is
the unpenalized intercept):
\begin{itemize}
\item [$L_1$:]  $N=25$,  $n=6$, $N_T=150$, $p=10$, $q=3$, $\sigma^2=0.25$, $\theta^2=0.56$ and
  $s_0=5$ with $\bm\beta_0=(1,2,4,3,3,0,\ldots,0)^T$. 
\item [$L_2$:]  $N=30$, $n=6$, $N_T=180$, $p=15$, $q=3$,
  $\sigma^2=0.25$,
\begin{equation*}
\bm\Psi=\begin{pmatrix} 5 &2 & 0.5 \\ 2 & 2& 1 \\ 0.5 & 1 &  1 \end{pmatrix}
\end{equation*}
and $s_0=5$ with $\bm\beta_0=(1,2,4,3,3,0,\ldots,0)^T$.
\end{itemize}

The results in the form of means and standard deviations (in parentheses) over 100 simulation
runs are reported in Table \ref{table1}, \ref{table2} and \ref{table3}. Therein, $|S(\hat{\bm{\beta}})|$
denotes the cardinality of the estimated active set and TP is the number of true
positives. 
We would like to emphasize that we do not penalize any covariate having a
random-effects coefficient (indicated by an asterisk $^*$).
\begin{table}[!h]
\footnotesize
\begin{center}
\caption{\textit{Comparison of lmmLasso, lmmadLasso, lme, Lasso and adLasso
    for model $L_1$}} \label{table1}
\vspace{0.2cm}
\begin{tabular}{c||cc|cc|ccccc}
\hline \hline
 Method & $|S(\hat{\bm\beta})|$ & TP & $\hat{\sigma}^2$ & $\hat{\theta}^2$
& $\hat{\beta}_1$ & $\hat{\beta}_{2}$ &  $\hat{\beta}_{3}$ & $\hat{\beta}_{4}$ & $\hat{\beta}_{5}$ \\ 
\hline
     true& 5 & 5  & 0.25 &  0.56 & 1 & 2 & 4 & 3 & 3 \\
\hline
lmmLasso &5.94  & 5 & 0.24 & 0.55&$ 0.99^*$ &$2.01^*$ &$4.03^*$& 2.94 & 2.95 \\
&(1.04)&(0)&(0.04)&(0.11)&(0.14)&(0.15)& (0.15) & (0.06)& (0.06)\\
lmmadLasso &5.11  & 5 & 0.24 & 0.55&$ 0.99^*$ &$2.01^*$&$4.02^*$& 2.99 & 3 \\
&(0.31)&(0)&(0.04)&(0.11)&(0.14)&(0.15)& (0.15) & (0.05)& (0.06)\\
lme& 5.14 & 5 & 0.24& 0.55 &$0.99^*$ &$2.01^*$&$4.02^*$&$2.99^*$&$3^*$\\
   &(0.35)&(0)&(0.04)&(0.11)&(0.14)&(0.15)& (0.15) & (0.05)& (0.06)\\
Lasso& 5.54&5&1.85& - &$1.00^*$ &$1.99^*$&$4.04^*$&2.88 & 2.89\\
   &(0.69)&(0)&(0.38)& -&(0.16)&(0.20)& (0.18) & (0.12)& (0.12)\\
adLasso& 5.54 & 5 &1.81& - &$1.00^*$ &$1.99^*$&$4.01^*$&2.99 & 3.00\\
   &(0.69)&(0)&(0.37)&-&(0.16)&(0.20)& (0.18) & (0.11)& (0.11)\\
\hline
\end{tabular}
\vspace{0.4cm}

\scriptsize
* indicates that the corresponding fixed-effects coefficient is not subject to penalization
\end{center}
\end{table}

\begin{table}[!h]
\footnotesize
\begin{center}
\caption{\textit{Comparison of lmmLasso, lmmadLasso, lme, Lasso and adLasso for model $L_2$}} \label{table2}
\vspace{0.2cm}
\begin{tabular}{c||cc|c|ccccc}
\hline \hline
Method & $|S(\hat{\bm\beta})|$ & TP & $\hat{\sigma}^2$ 
& $\hat{\beta}_1$ & $\hat{\beta}_{2}$ &  $\hat{\beta}_{3}$ & $\hat{\beta}_{4}$ & $\hat{\beta}_{5}$\\ 
\hline
     true& 5 & 5  & 0.25 & 1 & 2 & 4 & 3 & 3\\
\hline
lmmLasso &7.33  & 5 & 0.24 &$1.00^*$&$1.96^*$&$3.99^*$& 2.95 & 2.94\\
&(1.54)&(0)&(0.04)&(0.42)&(0.24)&(0.18)&(0.05) & (0.06)\\
lmmadLasso &5.31  & 5 & 0.24 &$1.00^*$&$1.96^*$&$3.98^*$& 3 & 2.99\\
&(0.72)&(0)&(0.04)&(0.42)&(0.24)&(0.18)& (0.05) & (0.06)\\
lme&4.85  &4.75  &0.24 &$0.73^*$&$1.86^*$&$3.95^*$&$3^*$&$2.99^*$\\
&(0.75) &(0.5)&(0.04)&(0.67)&(0.32)&(0.19)&(0.05)&
(0.06)\\
Lasso&5.59 &5  &8.43 &$1.00^*$&$1.92^*$&$4.05^*$&2.72 & 2.68\\
&(1.02) &(0)&(2.27)&(0.44)&(0.38)&(0.29)&(0.27)&(0.24)\\
adLasso&5.59&5&8.23&$1.00^*$&$1.92^*$&$3.98^*$& 2.99 & 2.94\\
&(1.02)&(0)&(2.21)&(0.44)&(0.37)&(0.30)&(0.26)&(0.23)\\
\hline
\end{tabular}
\vspace{0.1cm}

\scriptsize
* indicates that the corresponding fixed-effects coefficient is not subject to penalization
\end{center}
\end{table}

\begin{table}[!h]
\footnotesize
\begin{center}
\caption{\textit{Covariance estimates of lmmLasso and lme for $L_2$}} \label{table3}
\vspace{0.2cm}
\begin{tabular}{c|cccccc}
\hline
Method & $\Psi_{11}$&$\Psi_{12}$&$\Psi_{13}$&$\Psi_{22}$&$\Psi_{23}$&$\Psi_{33}$ \\ 
\hline
true & 5 & 2 & 0.5 & 2 & 1 & 1  \\
\hline
lmmLasso & 4.83 & 1.95 & 0.58 & 1.91 & 1.03 & 1.04\\
         &(1.26)&(0.76)&(0.51)&(0.58)& (0.38) & (0.32)\\
lmmadLasso & 4.84 & 1.95 & 0.58 & 1.92 & 1.04 & 1.04\\
           &(1.26)&(0.76)&(0.51)&(0.58)&(0.39)&(0.32)\\
lme&5.03 &2.01&0.6&1.94&1.04&1.04\\
   &(1.43)&(0.87)&(0.53)&(0.60)&(0.39)&(0.33)\\
\hline
\end{tabular}
\vspace{0.3cm}
\end{center}
\end{table}

We see from the tables that the estimated average active set is sparse and
only slightly larger than the cardinality of the true active set
$S_0=S(\bm\beta_0)$. This property might be expected because it is known
from linear regression that the BIC selects a sparse model. All methods
except \textit{lme} in model $L_2$ are including the true
non-zero coefficients in the active set. Concerning variance components, we
clearly see that the estimated error variance of \textit{Lasso} and
\textit{adLasso} can be reduced and split into the within and
between-subject variability by \textit{lmmLasso} and \textit{lmmadLasso},
respectively. The tables show that the penalized fixed-effects coefficients from
\textit{lmmLasso} have a bias. However, it is smaller than that of the
corresponding coefficients from the \textit{Lasso}. By using
\textit{lmmadLasso}, we can attenuate the bias problem. From Table
\ref{table3} we note that the variance component estimates of
$\bm\Psi$ are
underestimated compared to the results from \textit{lme}. However, by a
closer look, the
fixed effects of \textit{lme} have a larger bias in \textit{lme} than in
\textit{lmmLasso} and \textit{lmmadLasso}. It seems that \textit{lme}
estimates the variance components more precisely while underestimating the
corresponding fixed effects.
Finally, it must be recognized
that the backward selection used for lme regularly breaks down due to convergence problems within the R-function \texttt{lme}.  

\end{document}